\documentclass[11pt,reqno]{amsart}

\usepackage{amsmath,amsthm,amssymb,comment,fullpage}
\usepackage{braket}
\usepackage{mathtools}

\usepackage{caption}
\usepackage{subcaption}
\usepackage{times}
\usepackage[T1]{fontenc}
\usepackage{mathrsfs}
\usepackage{latexsym}
\usepackage[dvips]{graphics}
\usepackage{epsfig}
\usepackage{amsmath,amsfonts,amsthm,amssymb,amscd}
\input amssym.def
\input amssym.tex
\usepackage{color}
\usepackage{hyperref}
\usepackage{url}
\newcommand{\bburl}[1]{\textcolor{blue}{\url{#1}}}

\usepackage{tikz}
\usepackage{tkz-tab}
\usepackage{tkz-graph}
\usetikzlibrary{shapes.geometric,positioning}


\newcommand{\burl}[1]{\textcolor{blue}{\url{#1}}}

\numberwithin{equation}{section}

\newtheorem{thm}{Theorem}[section]

\newtheorem{prop}[thm]{Proposition}

\theoremstyle{plain}

\newtheorem{corollary}[thm]{Corollary}
\newtheorem{definition}[thm]{Definition}

\newtheorem{theorem}[thm]{Theorem}
\newtheorem{problem}[thm]{Problem}





\newcommand\be{\begin{equation}}
\newcommand\ee{\end{equation}}
\newcommand\bee{\begin{equation*}}
\newcommand\eee{\end{equation*}}
\newcommand\bea{\begin{eqnarray}}
\newcommand\eea{\end{eqnarray}}
\newcommand\beae{\begin{eqnarray*}}
\newcommand\eeae{\end{eqnarray*}}
\newcommand\bi{\begin{itemize}}
\newcommand\ei{\end{itemize}}
\newcommand\ben{\begin{enumerate}}
\newcommand\een{\end{enumerate}}
\newcommand\bc{\begin{center}}
\newcommand\ec{\end{center}}
\newcommand\ba{\begin{array}}
\newcommand\ea{\end{array}}







\newcommand\frakfamily{\usefont{U}{yfrak}{m}{n}}
\DeclareTextFontCommand{\textfrak}{\frakfamily}






\newcommand{\hr}[1]{\href{#1}{\url{#1}}}


\newcommand{\ch}{\textnormal{ch}}
\newcommand{\textAnd}{\hspace{3mm}\textnormal{and}\hspace{3mm}}


\def\twobytwoMat(#1, #2, #3, #4){
    {
        \begin{bmatrix}
            {#1} & {#2}\\
            {#3} & {#4}
        \end{bmatrix}
    }
}

\def\twobyoneMat(#1, #2){
    {
        \begin{bmatrix}
            {#1}\\
            {#2}
        \end{bmatrix}
    }
}

\def\twobytwoDet(#1, #2, #3, #4){
    {
        \begin{vmatrix}
            {#1} & {#2}
            {#3} & {#4}
        \end{vmatrix}
    }
}

\newcommand\blfootnote[1]{%
  \begingroup
  \renewcommand\thefootnote{}\footnote{#1}%
  \addtocounter{footnote}{-1}%
  \endgroup
}

\title{Leslie Population Models in Predator-prey and Competitive populations: theory and applications by machine learning}

\author[Gilman, Miller, Son, Waheed, Wang]{Pico Gilman, Steven J. Miller, Daeyoung Son, Saad Waheed, Janine Wang}





\subjclass[2024]{92D25, 15A18  (primary) 37N25, 81Q10 (secondary)}

\keywords{Predator-Prey Model, Leslie Matrix, Migration Model, Quantum Operators, Machine Learning}

\date{\today}

\begin{document}
\maketitle

\blfootnote{
\begin{center}
\textit{Email addresses:}\texttt{~picogilman@gmail.com} (Pico Gilman),\texttt{~sjm1@williams.edu} (Steven J. Miller),\texttt{~ds15@williams.edu} (Daeyoung Son),\texttt{~sw21@williams.edu} (Saad Waheed), \texttt{~jjw3@williams.edu} (Janine Wang)
\end{center}
}

\begin{abstract}
We introduce a new predator-prey model by replacing the growth and predation constant
by a square matrix, and the population density as a population vector. The classical
Lotka-Volterra model describes a population that either modulates or converges. Stability analysis of such models have been extensively studied by the works of Merdan \cite{MeD09}, \cite{Mer10}. The new model adds complexity by introducing an age group structure where the population of each age group evolves as prescribed by the Leslie matrix.

The added complexity changes the behavior of the model such that the population either
displays roughly a exponential growth or decay. We first provide an exact equation that describes a time evolution, and use analytic techniques to obtain an approximate growth factor. We also discuss the variants of the Leslie model, i.e., the complex value predator-prey model and the competitive model. We then prove the Last Species Standing theorem that determines the dominant population in the large time limit.

The recursive structure of the model denies the application of simple regression. We discuss a machine learning scheme that allows an admissible fit for the population evolution of Paramecium Aurelia and Paramecium Caudatum. Another potential avenue to simplify the computation is to use the machinery of quantum operators. We demonstrate the potential of this approach by computing the Hamiltonian of a simple Leslie system.
\end{abstract}

\tableofcontents

\thanks{This work was partially supported by NSF Grant DMS2241623, Williams College, and The Finnerty Fund.  We are grateful to Professor Allison L. Gill from the Williams Biology Department for
her advice on gathering data for Section \ref{sec:machinelearning}.}


\section{Introduction}\label{sec:intro}

Leslie matrices describe the time evolution of a homogeneous population with multiple age groups. Consider a discrete-time evolution of a whale population that is comprised of three age groups. We let $a^{(i)}_n:\mathbb Z_{\rm pos} \rightarrow \mathbb R$ for $1 \ \leq \  i \ \leq \  3$ be the time-dependent populations of each age group; for example, $a_n^{(1)}$ is the population of newborns at time $n$. We define the population vector as
\begin{equation}
    \vec a_n \ := \ (a^{(1)}_n, a^{(2)}_n, a^{(3)}_n)^T.
\end{equation}
The total population is the sum of the populations of all age groups, or the sum of all entries in the population vector: $a^{(1)}_n + a^{(2)}_n + a^{(3)}_n$.

If we set the fertility rate of the whales to be $f \ > \  0$, constant across all age groups, and assume that the whales have a survival rate of 1, that is, they do not die from reasons other than old age, then we obtain a set of equations that describe the time evolution of the population:
\begin{align}
    a^{(1)}_{n + 1} \ &= \ f\cdot(a^{(1)}_n + a^{(2)}_n + a^{(3)}_n)
    \nonumber \\
    a^{(2)}_{n + 1} \ &= \ a^{(1)}_n \\
    a^{(3)}_{n + 1} \ &= \ a^{(2)}_n. \nonumber
\end{align}

This equation can be rewritten in matrix form. For this, we define the Leslie matrix $L$
\begin{eqnarray}
   L \ = \
    \begin{bmatrix}
        f & f & f \\
        1 & 0 & 0 \\
        0 & 1 & 0
    \end{bmatrix}, \nonumber
\end{eqnarray}
and we get that
\begin{eqnarray}
    \vec a_{n + 1} \ =\ L\vec a_n \ = \
    \begin{bmatrix}
        f & f & f \\
        1 & 0 & 0 \\
        0 & 1 & 0
    \end{bmatrix}\vec a_n,
\end{eqnarray}
which models our whale population. The advantage of using Leslie matrices is that the population vector at any given time can be expressed as a matrix power. If the population vector at time zero is $\vec a_0$, then the population vector at
time $n$ is
\begin{equation}
    \vec a_n  \ = \ L^n \vec a_0.
\end{equation}
The expression can be further simplified for faster computation. If the population vector $\vec a_0$ is an eigenvector of the Leslie matrix $L$ with an eigenvalue of $\lambda \in \mathbb R$, then we obtain
\begin{equation}
    \vec a_n  \ = \ \lambda^n \vec a_0,
\end{equation}
and the growth rate of the population is characterized by the eigenvalue $\lambda$.

The same technique used to describe homogenous populations can be applied to heterogenous populations, as is the case in a predator-prey model. For this model, let us now assume that whales consume plankton, which only has one age group.
We denote the plankton population by $b_n:\mathbb Z_{\rm pos} \rightarrow \mathbb R$\footnote{
In reality, the population is positive. Nonetheless, for a simple presentation, we choose the domain of $b_n$ to be over the reals. In practice, when the population reaches a negative value, the population is considered to be extinct.
} and introduce a predation rate $k \ > \  0$, a plankton population multiplier $m\ > \  0$, as well as a plankton fertility rate $F \ > \  0$. Writing the
new population vector as
\begin{equation}
    \vec p_n \ := \ (a^{(1)}_n, a^{(2)}_n, a^{(3)}_n, b_n)^T,
\end{equation}
we arrive at a new model
\begin{eqnarray}\label{eqn:motivatingModel}
    \vec p_{n + 1} \ :=\ \widetilde L\vec p_n \ = \
    \begin{bmatrix}
        f & f & f & m\\
        1 & 0 & 0 & 0\\
        0 & 1 & 0 & 0 \\
        -k & -k & -k & 1 + F
    \end{bmatrix}\vec p_n.
\end{eqnarray}

Depending on the parameters $(f, F, m, k)$, the model can either describe a situation where the predator population exhausts the prey population with too high a predation rate (itself eventually also becoming extinct due to starvation), or one where, with an appropriate predation rate, both populations grows. These two cases are illustrated in Figure \ref{fig:Mot12}.


\begin{figure}[htp]
    \centering
    \begin{subfigure}[b]{0.45\textwidth}
        \includegraphics[width=\textwidth]{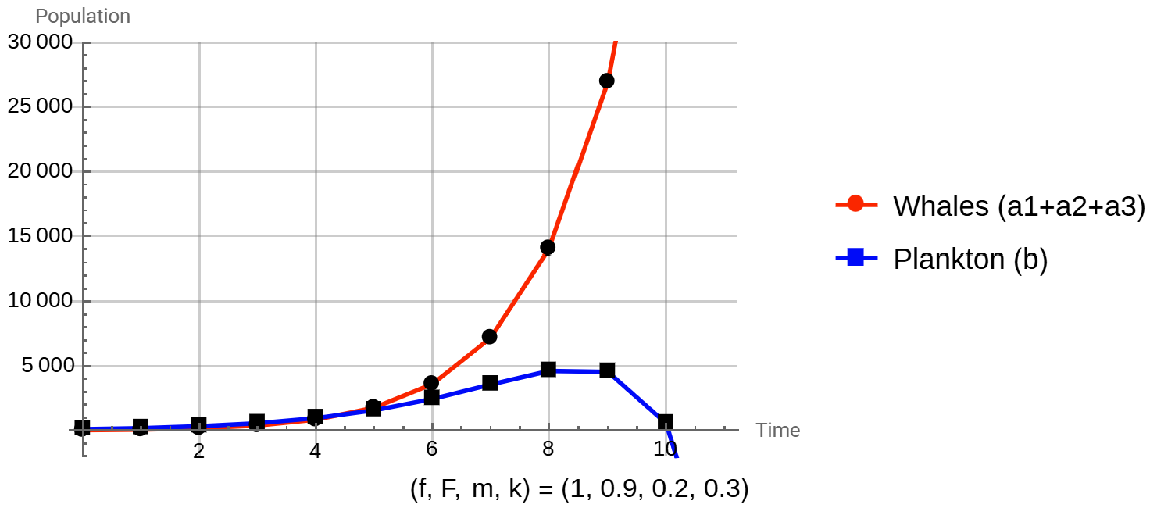}
        \caption{High predation and prey exhaustion}
        \label{fig:Mot2}
    \end{subfigure}
    \hfill
    \begin{subfigure}[b]{0.45\textwidth}
        \includegraphics[width=\textwidth]{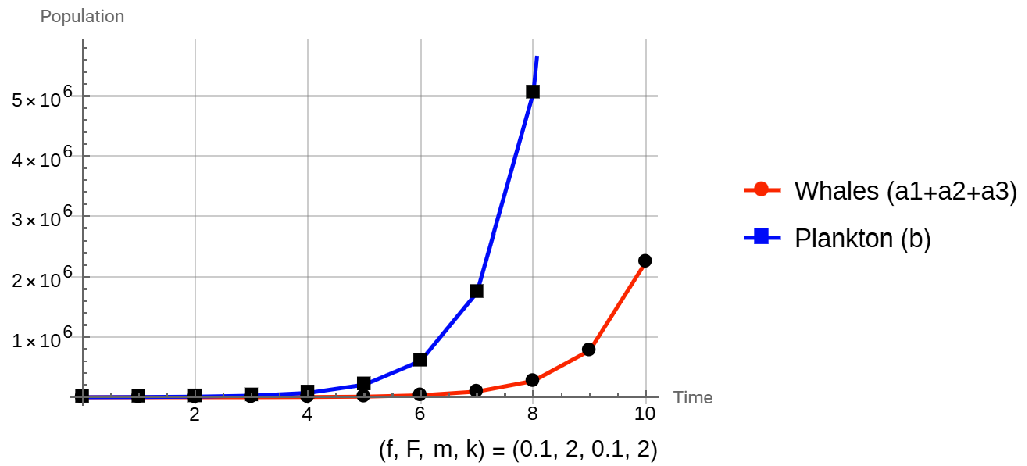}
        \caption{Low predation and mutual population growth}
        \label{fig:Mot1}
    \end{subfigure}
    \caption{Plot of model \ref{eqn:motivatingModel} for varying parameters.}
    \label{fig:Mot12}
\end{figure}

Not all parameters predict a realistic population evolution. For example, for certain values of $(f, F, m, k)$, \ref{eqn:motivatingModel} displays an oscillatory behavior where the population is described as a negative number.

\begin{figure}[htp]
    \centering
    \includegraphics[width=0.8\textwidth]{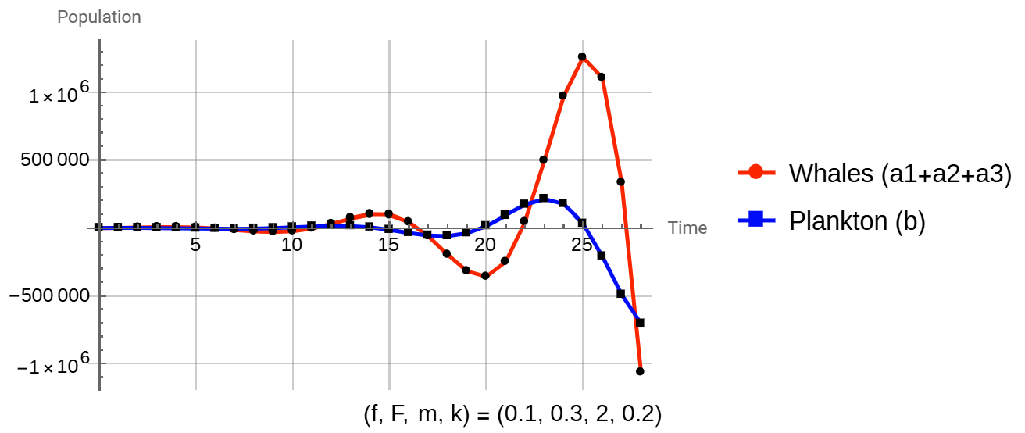} 
    \caption{Figure of oscillatory population with high predation.}
    \label{fig:example}
\end{figure}

In the following sections, we begin by defining and studying the eigenvalues of a simple
Leslie matrix. Furthermore, we show that a population model with nonsimple Leslie matrices can be approximated by a simplified version with simple Leslie matrices.  (Section \ref{sec:single})
. We then introduce the Leslie predator-prey model for real and complex values. The solution for the complex model follows nicely from these results.
We also introduce the competitive population model where each population consumes the other to promote growth of their own species.
We present a closed-form formula for
the population of the predator-prey model with real value
using a generating function approach (Section \ref{sec:PPmodels}). The complexity of the closed-form formula motivates us to study the model for a
small number of age groups, i.e., one age group for each predator and prey
(Section \ref{sec:scalarModel}). Next, using the observations made in Section \ref{sec:single}, we provide an
asymptotic growth rate for the complex model and prove the last-species standing
theorem for the competitive model (Section \ref{sec:lastSpecies}).

To test the presented theory of Leslie population models, we used the competitive model to explain the population evolution Paramecium Aurelia and Paramecium Caudatum. The recursive structure denies the application of standard regression techniques. A Machine Learning Scheme developed to generate an admissible fit is presented, along with a fit data with statistic significance (Section \ref{sec:ML}).

Finally, the complex predator-prey model motivates our study to consider
the use of quantum mechanics
to describe the population. We investigate a specific case of population evolution and compute the 
Hamiltonian \footnote{For more information on the standard theory, refer to \cite{Bag19}.} of the system under the assumption that the population obeys the time-dependent Schrodinger equation.

\section{Single Species Population}\label{sec:single}

\subsection{Definition of Simple Leslie Matrices and the Lotka-Euler Equation}

As described in Section \ref{sec:intro}, Leslie matrices characterize changes in a species' population with different age groups given its survival and fertility rates. We focus on a specific class of Leslie matrices with a fixed fertility rate $f$ and a survival rate 1.

\begin{definition}[Leslie Matrices]
    \label{LeslieDef}

    Suppose $N \in \mathbb{Z}^+$ is the number of age groups and $f_1, \dots, f_N \in \mathbb R$ be fertility rates of each age group. A simple Leslie matrix that
    characterizes the population evolution is defined as follows:

    \[
        (L_{f_1, \dots, f_N})_{ij} \ =\ \begin{cases}
            f_i & (i \ = \  1)\\
            1 & (i \neq 1 \wedge j\ = \ i+1)\\
            0 & \rm {otherwise}\\
        \end{cases}
    \]
        or, writing the matrix out,
    \[(L_{f_1, \dots, f_N}) \ = \
    \begin{bmatrix}
        f_1 & f_2& \cdots & f_{N - 1} & f_N\\
        1 & 0 & \cdots & 0 & 0 \\
        0 & 1 & \cdots & 0 & 0\\
        \vdots & \vdots & \ddots & \vdots & \vdots\\
        0 & 0 &\cdots & 1 & 0
    \end{bmatrix}.
    \]

    If the fertility rate $f_1, f_2, \dots, f_N$ are constantly equal to some fixed fertility rate $f$, then we say that the Leslie matrix is \textbf{simple}. Also, for a non-simple Leslie matrix with a variable fertility rate, we define the simplified Leslie matrix $L_f$ as follows.
    \[L_f \ := \
    \begin{bmatrix}
        f & f& \cdots & f & f\\
        1 & 0 & \cdots & 0 & 0 \\
        0 & 1 & \cdots & 0 & 0\\
        \vdots & \vdots & \ddots & \vdots & \vdots\\
        0 & 0 &\cdots & 1 & 0
    \end{bmatrix}.
    \]
    The quantity $f$ is the average fertility rate, which is defined as
    \[
        f \ := \ \frac {\sum_{i \ = \  0}^N f_i} N.
    \]

\end{definition}

The population at time $n$ can be modeled by a tuple of real numbers, which we write out as a vector.

\begin{definition}[Time evolution of a single population]
Denote the population vector as $\vec p_n \in \mathbb R^N$.
Given an initial population vector $\vec p_0 \ := \ \vec v$, the population at time $n$ is given by
\begin{equation}
    \vec p_n \ =\ (L_{f_1, \dots, f_N})^n \vec v
\end{equation}
\end{definition}

We present the following theorem, which allows us to focus our
analysis on simple Leslie matrices.

\begin{theorem} [Approximating general Leslie with simplified Leslie]
    Assume that the fertility rate $f_N$, i.e., the fertility rate of the oldest
    age group, is less than the average fertility rate. For time $n$ sufficiently large, the population vector for a model
    involving the general Leslie matrix with varying fertility rate can be approximated using simplified Leslie matrix using the following formula.
    \begin{equation}\label{eqn:simpleApprox}
        (L_{f_1, \dots, f_N})^n \vec v\ \approx \
        (L_f)^n \left[n\left(\sum_{k \ = \  1}^N v_k(f_k/f -1)\right)\vec e_N  + \vec v\right] \ := \
        (L_f)^n \left(n\xi_{v, f}\vec e_N  + \vec v\right)
    \end{equation}\footnote{We write $\xi_{v, f}$ to highlight that the quantity is dependent on the initial population vector $\vec v$ and the fertility rates $f_1,\dots, f_N$}
\end{theorem}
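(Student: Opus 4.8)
The plan is to treat the general Leslie matrix as a rank-one perturbation of its simplification and to extract the leading secular (proportional to $n$) correction. Write $L := L_{f_1,\dots,f_N}$ and set $\vec\delta := (f_1 - f, \dots, f_N - f)^T$. The deviation between the two matrices sits entirely in the first row, so $L = L_f + E$ with $E = \vec e_1\,\vec\delta^{\,T}$ a rank-one matrix; note $\sum_{k}\delta_k = 0$ since $f$ is the average fertility, and $\vec\delta^{\,T}\vec v = f\,\xi_{v,f}$.

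First I would expand the $n$-th power to first order in $E$ (Duhamel / telescoping), $L^n = L_f^{\,n} + \sum_{j=0}^{n-1} L_f^{\,n-1-j} E\, L_f^{\,j} + (\text{terms with} \ge 2 \text{ factors of } E)$, and apply it to $\vec v$. Because $E L_f^{\,j}\vec v = \big(\vec\delta^{\,T} L_f^{\,j}\vec v\big)\,\vec e_1$ is a scalar multiple of $\vec e_1$, the whole first-order term collapses to a scalar-weighted sum of powers of $L_f$ acting on $\vec e_1$. The key structural identity is that the last column of $L_f$ is $f\vec e_1$, i.e. $L_f\vec e_N = f\vec e_1$, equivalently $\vec e_1 = \tfrac1f L_f\vec e_N$; substituting this turns every $L_f^{\,n-1-j}\vec e_1$ into $\tfrac1f L_f^{\,n-j}\vec e_N$ and produces the vector $\vec e_N$ predicted by the theorem,
\[ \Big(\sum_{j=0}^{n-1} L_f^{\,n-1-j} E\, L_f^{\,j}\Big)\vec v \ = \ \frac1f\sum_{j=0}^{n-1}\big(\vec\delta^{\,T}L_f^{\,j}\vec v\big)\,L_f^{\,n-j}\vec e_N. \]

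Next I would read off the factor $n$ from the dominant-eigenvalue asymptotics of $L_f$. Using Perron--Frobenius for the primitive matrix $L_f$ (real simple dominant eigenvalue $\lambda_0$ with right/left eigenvectors $\vec u,\vec w$), insert the spectral decomposition $L_f^{\,m} = \sum_i \lambda_i^{\,m} P_i$ into the displayed sum. The resulting double sum over eigenvalue indices contains a resonant diagonal part (equal indices) whose geometric sum equals $n$, while every off-diagonal part telescopes into terms of size $O(\lambda_0^{\,n})$ carrying no factor of $n$. The dominant resonant piece is $\tfrac1f\,n\,\lambda_0^{\,n}\big(\vec\delta^{\,T}P_0\vec v\big)\,P_0\vec e_N$, which—once $\vec v$ has aligned with the stable age distribution $\vec u$, so that $\vec\delta^{\,T}L_f^{\,j}\vec v \approx \lambda_0^{\,j}\,\vec\delta^{\,T}\vec v$—reduces to $n\,\xi_{v,f}\,L_f^{\,n}\vec e_N$. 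Adding the zeroth-order term $L_f^{\,n}\vec v$ gives $L^n\vec v \approx L_f^{\,n}\big(\vec v + n\,\xi_{v,f}\,\vec e_N\big)$, as claimed.

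The hard part will be the error control, because the expansion is secular: the true growth rate of $L$ is a shifted eigenvalue $\lambda_0 + \Delta$ with $\Delta = O(\|\vec\delta\|)$, and $(\lambda_0+\Delta)^n = \lambda_0^{\,n}\big(1 + n\Delta/\lambda_0 + \cdots\big)$, so the linear-in-$\vec\delta$ term I keep is only the first term of a series that must be resummed once $n\Delta$ ceases to be small. I therefore expect the clean statement to hold precisely in the window where $n$ is large enough for age-structure alignment yet $n\|\vec\delta\|$ stays bounded, and I would make this quantitative by (i) bounding the $\ge 2$-factor terms by $C\,(n\|\vec\delta\|)^2\lambda_0^{\,n}$ through the same spectral-gap estimate, and (ii) bounding the alignment error $\vec\delta^{\,T}L_f^{\,j}\vec v - \lambda_0^{\,j}\vec\delta^{\,T}\vec v$ by a geometric series in $|\lambda_1/\lambda_0| < 1$, where $\lambda_1$ is the subdominant eigenvalue. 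Pinning down the precise quantitative role of the hypothesis $f_N < f$ in these bounds (it should fix the sign and size of the leading correction and keep $L$ below the resummation threshold) is the remaining technical point.
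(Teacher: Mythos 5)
Your argument reaches the stated formula but by a genuinely different route. The paper factors $L_{f_1,\dots,f_N}=L_f(I+C)$ with $C=\frac1f\vec e_N\vec\delta^{\,T}$ supported on the last row, expands $(I+C)^n$ binomially to $I+nC+\cdots$, and then passes from $(L_f^{-1}L_{f_1,\dots,f_N})^n\vec v$ to $(L_{f_1,\dots,f_N})^n\vec v=(L_f)^n(\cdots)$ --- a step that tacitly treats $L_f$ and $C$ as commuting. Your additive decomposition $L=L_f+\vec e_1\vec\delta^{\,T}$ with the Duhamel/telescoping expansion is the honest non-commutative version of the same first-order-in-perturbation computation: the identity $L_f\vec e_N=f\vec e_1$ plays exactly the role of the paper's $C=L_f^{-1}E$, and your resonant diagonal sum is where the factor $n$ genuinely arises. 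What your route buys is an explicit error structure: the observation that the expansion is secular and uniform only while $n\|\vec\delta\|$ stays bounded is a real caveat that the paper's blanket ``for $n$ sufficiently large'' obscures, and you locate the hypothesis $f_N<f$ correctly in the control of the higher powers of the perturbation (via $\vec\delta^{\,T}\vec e_N=f_N-f$). One point to tighten: your leading coefficient naturally comes out as $\frac1f\vec\delta^{\,T}P_0\vec v$ rather than $\xi_{v,f}=\frac1f\vec\delta^{\,T}\vec v$; these differ by the non-Perron components of $\vec v$, so the ``alignment'' step should be phrased as replacing $\vec v$ by its Perron projection and the resulting discrepancy acknowledged (it is of the same order as other terms you discard, and of the same order as the slop already present in the theorem's $\approx$), rather than assumed away.
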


\begin{proof}
It is straightforward verify the following matrix identity.
\begin{equation}\label{eqn:invLL}
L_f^{-1}L_{f_1, \dots, f_N} \ = \ I + C
\end{equation}
The matrix $C$ is a square matrix of order $N$ with $N-1$ zero columns with
one nonzero column.
\begin{equation}
(C)_{ij} \ = \ \begin{cases}
f_i/f - 1 &(i \ = \  N)\\
0 & (i \ < \  N)
\end{cases}
\end{equation}
Also, since $C$ has $N-1$ zero rows, for any integer $k \ > \  1$, we observe that $C^k \ = \ (f_N/f - 1)^k E_N$ where
\begin{equation}
(E_N)_{ij} \ = \ \begin{cases} 1 & (i \ = \  N, j \ = \  N) \\
0 & \rm otherwise.
\end{cases}
\end{equation}
Take power of $n$ to the both sides of \eqref{eqn:invLL} and invoke the Binomial Theorem.
\begin{equation}
(L_f^{-1}L_{f_1, \dots, f_N} )^n \ = \
(I + C)^n \ = \  \sum_{k \ = \  0}^N \binom{n}{k} C^k
\ = \ I + nC + \sum_{k \ = \  2}^N \binom{n}{k} C^k
\end{equation}
Use our observation on the powers of $C$.
\begin{equation}
\begin{split}
(L_f^{-1}L_{f_1, \dots, f_N} )^n & \ =
\ I - E_N + nC - (f_N/f - 1) n E_N
+ E_N \sum_{k \ = \  0}^N \binom{n}{k} (f_N/f - 1)^k \\
& \ = \ I - E_N + nC - (f_N/f - 1) n E_N
+ E_N (f_N/f)^n\\
& \approx \ I - E_N + nC - (f_N/f - 1) n E_N \\
\end{split}
\end{equation}
Where the second equality follows from the binomial theorem and the approximation from the assumption that $f_N \ < \  f$. Multiply both sides by the initial population vector $\vec v$:
\begin{equation}
(L_f^{-1}L_{f_1, \dots, f_N} )^n\vec v \approx \
\vec v + n \vec e_N \left(\sum_{k \ = \  1}^N v_k(f_k/f -1)\right)
\end{equation}
The vector $e_N$ is the elementary basis vector, i.e.,
\begin{equation}
    (\vec e_N)_i \ = \ \begin{cases}
    1 & (i \ = \  N) \\
    0 & \rm otherwise.
    \end{cases}
\end{equation}
Finally, multiplying both sides by $(L_f)^{-n}$ to the left yields the desired result.
\end{proof}

The implication of \eqref{eqn:simpleApprox} is that for sufficiently large time $n$,
the major contribution comes from the term $\xi_{v, f}\vec e_N$, allowing an even cruder approximation
\begin{equation}\label{eqn:simpleApprox}
        (L_{f_1, \dots, f_N})^n \vec v\ \approx \
        (L_f)^n (n\xi_{v, f}\vec e_N).
\end{equation}
The absolute error of this approximation will be large. However, when considering the relative growth rate between two different populations, which we will be mainly concerned about, the approximation suffices.

To demonstrate this, suppose we wish to compute the proportion of the total populations of two different populations. Suppose the first population has a initial population vector of $\vec v$ and $N$ age groups with fertility rates $f_1, \dots, f_N$. Let the second population to have an inital population vector of $\vec w$ and $N$ age groups with fertility rates $g_1, \dots, g_N$. The total population vector of each population is the $L^1$ norm of the following two vectors, which are
\begin{equation}
 (L_{f_1, \dots, f_N})^n \vec v \hspace{5mm}\textnormal {and}  \hspace{5mm} (L_{g_1, \dots, g_N})^n \vec w.
\end{equation}
Thus the proportion of the two population is
\begin{equation}
\frac{|(L_{f_1, \dots, f_N})^n \vec v|_1}{|(L_{g_1, \dots, g_N})^n \vec w|_1} \ = \
\frac{
|(L_f)^n n \xi_{v, f} \vec e_N + (L_f)^n \vec v|_1
}{
|(L_g)^n n \xi_{w, g} \vec e_N + (L_g)^n \vec w|_1
}
\ = \ \frac{
|(L_f)^n \xi_{v, f} \vec e_N |_1
}{
|(L_g)^n \xi_{w, g} \vec e_N |_1
}
+ O\left(\frac 1 n \right).
\end{equation}

Later, we verify that the eigenvalue with the largest modulus is a positive real value. We call this eigenvalue the
\textbf{dominant eigenvalue}.
The dominant eigenvalue of this matrix describes the asymptotic behavior of the population. To begin our discourse,  we compute the matrix's characteristic equation and find its roots.

\begin{theorem}[Lotka-Euler Equation]
    \label{LEeq}
    The characteristic equation of a simple Leslie matrix $L_f$ of
    order $N\ \geq \ 1$ is
    \[
        \ch_N(x) \ = \ x^N - f(x^{N-1} + \cdots + x + 1)
    \]
    which, using the geometric series formula, can be simplified to
    \[
        x^{N} - f \frac {x^N - 1} {x - 1}.
    \]
\end{theorem}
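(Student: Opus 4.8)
The plan is to compute the characteristic polynomial $\det(xI - L_f)$ directly by exploiting the sparse, near-companion structure of the simple Leslie matrix. The matrix $L_f$ has fertility rates $f$ along the entire top row and a subdiagonal of $1$'s, so it differs from a standard companion matrix only in sign conventions. I would therefore expand the determinant along the first column, where only two entries are nonzero: the $(1,1)$ entry $x - f$ and the $(2,1)$ entry $-1$.

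**First I would** set up $xI - L_f$ explicitly and perform cofactor expansion along the first column. The $(1,1)$ minor is itself the determinant of an $(N-1)\times(N-1)$ matrix of exactly the same Leslie type but with the top-row fertilities being $f_2,\dots,f_N$ (all equal to $f$), which suggests a recursion $\ch_N(x) = (x-f)\,\ch_{N-1}(x) - f\cdot(\text{correction})$. The cleaner route, which I expect to be the main technical step, is to recognize that after deleting the first row and first column the remaining structure is triangular enough to evaluate the second cofactor in closed form. Concretely, the minor attached to the $(2,1)$ entry has an upper-triangular block once the first column is removed, so its determinant contributes a term proportional to $f$ times a lower-order characteristic polynomial.

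**The cleanest approach**, avoiding a delicate recursion, is to verify the claimed formula by induction on $N$. For the base case $N=1$ we have $L_f=[f]$, so $\ch_1(x)=x-f$, matching $x^1 - f$. For the inductive step I would expand $\det(xI-L_f)$ along the first column to obtain
\begin{equation}
\ch_N(x) \ = \ (x-f)x^{N-1} - f(x^{N-2}+\cdots+x+1),
\end{equation}
where the first term comes from the $(1,1)$ cofactor (an upper-triangular minor with diagonal entries $x$) and the second term accumulates the fertility contributions from the remaining top-row entries via their cofactors, each of which collapses to a single power of $x$ because of the subdiagonal of $1$'s. Combining these gives $x^N - f(x^{N-1}+x^{N-2}+\cdots+x+1)$, as claimed.

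**The geometric-series simplification** is then purely algebraic: applying the identity $1+x+\cdots+x^{N-1} = (x^N-1)/(x-1)$ for $x\neq 1$ yields $x^N - f(x^N-1)/(x-1)$. The one subtlety worth flagging is the $x=1$ case, where the geometric-series form is only valid as a limit; but since the characteristic polynomial is defined for all $x$ by the first (polynomial) expression, this causes no genuine difficulty. **The main obstacle** I anticipate is bookkeeping the cofactor signs and confirming that each top-row fertility $f_j$ contributes exactly the power $x^{N-j}$ after the subdiagonal $1$'s are accounted for; I would handle this by carefully tracking which rows and columns survive in each minor rather than by a formal sign computation.
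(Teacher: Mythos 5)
Your proposal is correct and takes essentially the same approach as the paper: an induction on $N$ carried out via cofactor expansion of $\det(xI-L_f)$, with the base case $N=1$ trivial. The only difference is cosmetic --- the paper expands along the last column to get the clean recursion $\ch_{N+1}(x)=x\,\ch_N(x)-f$ and then invokes the inductive hypothesis, whereas you expand along the first column/row and evaluate the resulting triangular minors directly, which in fact makes the inductive hypothesis unnecessary.
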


\begin{proof} We induct on $N$. It is trivial to see that the equation holds for
$N \ = \  1$
. For the inductive step, consider $N \ > \  1$. We write out the
characteristic polynomial as a determinant expansion:

\[
    \ch_{N + 1}(x) \ := \ \textrm{det}(xI - L_f)
    \ = \
    \begin{vmatrix}
        x - f & -f& \cdots & -f & -f \\
        -1 & x & \cdots & 0& 0 \\
        0 & -1 & \cdots & 0 & 0\\
        \vdots & \vdots &\ddots & \vdots & \vdots\\
        0 & 0 & \cdots & -1 & x
    \end{vmatrix} .
\]

Expanding this with respect to the last column yields
\[
    \ch_{N + 1}(x) \ = \
    (-f)(-1)^N(-1)^N + x\ch_{N}(x).
\]
By the inductive hypothesis, we have
\[
    \ch_{N + 1}(x) \ = \
    -f + x\left(
    x^N - f(x^{N-1} + \cdots + x + 1)
    \right)
    \ = \
x^{N + 1} - f(x^{N} + \cdots + x + 1),
\]
which concludes the proof.
\end{proof}

We focus on the case where the simple Leslie matrix describes a growing population. We comment that the roots of $\ch_N(z)$ are exactly the
eigenvalues of $L_f$. As a corollary of the theorems in Appendix \ref{app:A}, we present the following result.

\begin{corollary}
    The eigenvalues of a simple Leslie matrix $L_f$ all
    have eigenvalues that have a modulus strictly less than 1, assuming $f \ < \ 1/N$. Otherwise, if $f \ \geq \ 1/N$, then there exists a unique positive real eigenvalue, greater than or equal to 1, that describes the asymptotic growth of the population.
    In other words, there exists $\lambda_{\max} \in [1, \infty) $ such that for any $\vec v \in \mathbb R^n$, there exists some $\vec w \in \mathbb R^n$ such that
    \begin{align}
        (L_f)^n \vec v \ \approx \ (\lambda_{\max})^n \vec w
    \end{align}
    for large enough $n \in \mathbb Z^+$.
    We call such eigenvalue of $L_f$ as the dominant eigenvalue.
\end{corollary}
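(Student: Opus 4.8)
The plan is to work directly with the characteristic polynomial $\ch_N$ supplied by the Lotka--Euler equation rather than routing through Appendix \ref{app:A}, since everything needed is visible from a normalized form. For $x \neq 0$ we may factor $\ch_N(x) = x^N\bigl(1 - \phi(x)\bigr)$, where $\phi(x) := f\sum_{k=1}^N x^{-k}$; note $\ch_N(0) = -f \neq 0$, so $0$ is never an eigenvalue and this factorization captures all roots. On the positive real axis $\phi$ is continuous and strictly decreasing, with $\phi(x)\to\infty$ as $x\to 0^+$ and $\phi(x)\to 0$ as $x\to\infty$. Hence $\phi(x) = 1$ has exactly one positive real solution, which I will call $\lambda_{\max}$; this is the unique positive real eigenvalue. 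Evaluating at $x=1$ gives $\phi(1) = fN$, so by monotonicity $\lambda_{\max} \geq 1$ precisely when $fN \geq 1$ and $\lambda_{\max} < 1$ when $fN < 1$, settling the location of the real root in both regimes.

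Next I would prove that $\lambda_{\max}$ dominates every other eigenvalue in modulus. Let $\lambda \in \C$ be any eigenvalue; dividing $\ch_N(\lambda)=0$ by $\lambda^N$ yields $1 = f\sum_{k=1}^N \lambda^{-k}$. Taking absolute values and applying the triangle inequality gives
\[
1 \ = \ \left| f\sum_{k=1}^N \lambda^{-k}\right| \ \leq\ f\sum_{k=1}^N |\lambda|^{-k} \ = \ \phi(|\lambda|),
\]
so $\phi(|\lambda|) \geq 1 = \phi(\lambda_{\max})$, and monotonicity of $\phi$ forces $|\lambda| \leq \lambda_{\max}$. In the regime $f < 1/N$ we already have $\lambda_{\max} < 1$, so this bound immediately gives $|\lambda| \leq \lambda_{\max} < 1$ for every eigenvalue, proving the first assertion.

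To upgrade this to strict dominance (so that a single eigenvalue controls the asymptotics) I would examine the equality case of the triangle inequality above: equality requires the terms $\lambda^{-1}, \dots, \lambda^{-N}$ to share a common argument, and comparing the $k=1$ and $k=2$ terms forces $\arg\lambda \equiv 0 \pmod{2\pi}$, i.e. $\lambda$ is positive real and hence equals $\lambda_{\max}$. Thus every eigenvalue $\lambda \neq \lambda_{\max}$ satisfies $|\lambda| < \lambda_{\max}$. Simplicity of $\lambda_{\max}$ then follows from $\ch_N'(\lambda_{\max}) = -\lambda_{\max}^N\,\phi'(\lambda_{\max}) > 0$, since $\phi' < 0$ on $(0,\infty)$.

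Finally, the asymptotic $(L_f)^n \vec v \approx (\lambda_{\max})^n \vec w$ follows by decomposing $\vec v$ in a Jordan basis of $L_f$: the contribution of the simple eigenvalue $\lambda_{\max}$ is $(\lambda_{\max})^n$ times the projection $\vec w$ of $\vec v$ onto its eigenline, while every remaining Jordan block contributes a polynomially weighted $|\lambda|^n$ with $|\lambda| < \lambda_{\max}$, which is $o\bigl((\lambda_{\max})^n\bigr)$. I expect the main obstacle to be exactly this dominance step: the bare modulus inequality only delivers $|\lambda| \leq \lambda_{\max}$, so the crux is the equality-case analysis ruling out other eigenvalues of equal modulus (equivalently, primitivity of the simple Leslie matrix, which holds because all fertilities are positive). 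Without it, the approximation by the single power $(\lambda_{\max})^n$ could fail through persistent oscillation.
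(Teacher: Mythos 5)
Your proof is correct, but it takes a genuinely different route from the paper's. The paper proves this corollary by appeal to Appendix \ref{app:A}, where the key tool is Rouch\'e's theorem: the auxiliary polynomial $h(z)=(z-1)\ch_N(z)=z^{N+1}-(f+1)z^N+f$ is split into two summands whose moduli are compared on a contour of radius $1+\epsilon$, giving $N$ roots inside and one outside, and the Intermediate Value Theorem then places the outside root on the positive real axis (with a companion Rouch\'e argument for the $f<1/N$ regime). You instead normalize the characteristic equation to the Euler--Lotka form $\phi(x)=f\sum_{k=1}^N x^{-k}=1$ and exploit strict monotonicity of $\phi$ on $(0,\infty)$ together with the triangle inequality and its equality case. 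Your route is more elementary (no complex analysis beyond $\left|\sum z_k\right|\le\sum|z_k|$), handles both regimes at once since $\phi(1)=fN$ decides whether $\lambda_{\max}$ sits above or below $1$, and it makes explicit two facts the appendix leaves implicit but which the corollary's asymptotic actually requires: strict dominance $|\lambda|<\lambda_{\max}$ for every other eigenvalue (via the equality-case analysis, i.e.\ primitivity), and simplicity of $\lambda_{\max}$ via $\ch_N'(\lambda_{\max})=-\lambda_{\max}^N\,\phi'(\lambda_{\max})>0$; these are precisely what justify collapsing the Jordan decomposition to the single term $(\lambda_{\max})^n\vec w$. The only cosmetic caveat is that your equality-case step comparing the $k=1$ and $k=2$ terms presupposes $N\ge 2$; for $N=1$ the matrix is the scalar $f$ and the claim is immediate.
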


We focus on populations that grow in the long term period. Hence, we assume $f \ \geq \ 1/N$ for most cases.

\section{The Leslie Predator-Prey Model}

\subsection{The Classic Lotka-Volterra Predator-Prey Model}
Let $x(t)$ and $y(t)$ be continuous functions
that describe the respective densities of prey and predator populations. That is, both $x$ and $y$ have ranges within the interval $[0, 1]$. The classic predator-prey model is described by a system of differential equations:
\begin{eqnarray}
    \frac{dx}{dt} & \ = \  &
    rx(1-x) - axy \nonumber\\
    \frac {dy}{dt} & \ = \ &
    ay(x-y),
\end{eqnarray}
$r, a \ > \  0 $ are reproductive ratio and predation ratio respectively.
Studies of the classic model focus on finding the conditions for which the system reaches stability. In \cite{Mer10}, Merdan explores a
similar system that accounts for the Allee effect, where the population growth is diminished when the population size is small. The model is described by the equations below:
\begin{eqnarray}
    \frac {dx}{dt} &\  =  \ & r\alpha(x) x (1-x) - axy \nonumber \\
    \frac {dy}{dt} & \ = \ & ay(x - y).
\end{eqnarray}
The term $\alpha(x) \ := \ x/(\beta + x)$ captures the Allee effect. Merdan shows that under the condition
\begin{eqnarray}
    r - \alpha \beta  \ > \ 0,
\end{eqnarray}
the population converges to a
positive stable state:
\begin{eqnarray}
    (x_*, y_*) \ = \ ((r - \alpha\beta)/(a + r), (r-\alpha\beta)/(a + r)).
\end{eqnarray}

\subsection{The Predator-Prey Model with Leslie Matrices}\label{sec:PPmodels}

We wish to account for different age groups in the predator and prey populations. Hence, we replace population density, which was previously a scalar function, by a vector. We also replace the previous reproductive and predation ratios by Leslie matrices.

\begin{definition}[Leslie Predator-Prey Model]
Let $\vec \alpha_n$, $\vec \beta_n \in \mathbb R_{\rm pos}^N$ be the population vectors
for the predator and prey species at time $n$
Let both populations have $N$ different age groups, resulting in the following population vectors:
\begin{eqnarray}
    \vec \alpha_n & \ = \ & (\alpha_n^{(1)}, \dots, \alpha_n^{(N)})^T
    \nonumber \\
    \vec \beta_n & \ = \ & (\beta_n^{(1)}, \dots, \beta_n^{(N)})^T.
\end{eqnarray}
 The population
vectors are defined by the following
system of matrix difference equations:
\begin{eqnarray}
    \vec \alpha_{n + 1} & \ = \ & L_a \vec \alpha_n + k m \vec \beta_n \nonumber \\
    \vec \beta_{n + 1} &\ = \ & L_b \vec \beta_n - k \vec \alpha_n.
\end{eqnarray}
The constants $k$ and $m$ are respectively the predation and nurturing ratios, both greater than zero \footnote{
The predation ratio describes the amount of prey consumed
by each predator. The nurturing ratio describes the population boost that comes from predation.
}
. We set the predator and prey populations to be the sum of their vector entries. Symbolically, we let $P_{a, n}, P_{b, n}$ denote the predator and prey populations, given by the following sums:
\begin{equation}
    P_{a, n} \ = \ \sum_{k \ = \  1}^{N} \alpha^{(k)}_{n}  \hspace{3mm} \textrm{and} \hspace{3mm}
    P_{b, n} \ = \ \sum_{k \ = \  1}^{N} \beta^{(k)}_{n}.
\end{equation}
\end{definition}

We assume that the dominant eigenvalue of $L_\alpha$ does not have a dominant eigenvalue and $L_\beta$ has a dominant eigenvalue.
In other words, the predator population decays in absence of prey and the prey population explodes in absence of predators.

Furthermore, the populations are fixed to be non-negative. If a population reaches zero
at some time $n \in \mathbb{Z}^+$,
we say the species has gone extinct. Notice that if the predation rate is too high, the prey population
will be exhausted and subsequently the predator
population will also become extinct. On the other hand, if the predation rate is too low, the predator population will be unable to sustain
itself and will equally become extinct. Hence, it is natural to ask the following question.

\begin{problem}[Optimal Predation Strategy]
What range of the real value $k$ guarantees exponential growth of the predator population? Moreover, what value of $k$ ensures maximum growth?
\end{problem}

The real-valued Leslie Predator-Prey model motivates us to study a complex-valued model.
By multiplying $i \ = \  \sqrt{-1}$ to one of the summands can
be considered as a time-delay in the population change.

\begin{definition}[Complex Leslie Predator-Prey Model]\label{thm:complexModel}
Let $\alpha_n, \beta_n \in \mathbb R_{\rm pos}^N$ for $n \ = \  0$ and $\alpha_n, \beta_n \in \mathbb C^N$ for $n \ > \  0$ be the population vectors
of the predator and prey species at time $n$:
\begin{eqnarray}
    \vec \alpha_n & \ = \ & (\alpha_n^{(1)}, \dots, \alpha_n^{(N)})^T
    \nonumber \\
    \vec \beta_n & \ = \ & (\beta_n^{(1)}, \dots, \beta_n^{(N)})^T.
\end{eqnarray}
 The population
vectors are defined by the following
system of matrix difference equations:
\begin{eqnarray}
    \vec \alpha_{n + 1} & \ = \ & iL_a \vec \alpha_n + k m \vec \beta_n \nonumber \\
    \vec \beta_{n + 1} & \ = \ & iL_b \vec \beta_n - k \vec \alpha_n,
\end{eqnarray}
where $k$ and $m$ are respectively the predation and nurturing ratios, both greater than zero. We again set the predator and prey populations, $P_{a, n}$ and $P_{b, n}$, to be the sum
of their vector entries. In symbols, we have
\begin{equation}
    P_{a, n} \ = \ \left\|\sum_{k \ = \  1}^{N} \alpha^{(k)}_{n} \right\|
    \hspace{3mm} \textrm{and} \hspace{3mm}
    P_{b, n} \ = \ \left\|\sum_{k \ = \  1}^{N} \beta^{(k)}_{n} \right\|
\end{equation}
\end{definition}


Experimentally speaking, for the complex model, the population grows almost surely unless the predation rate $k$ is zero. The natural question to ask for this model is therefore the following:

\begin{problem}[Modeling Predator Growth]
    What is the growth rate of the predator population as $n \rightarrow \infty$?
\end{problem}


By elementary substitutions, we obtain the following proposition.

\begin{prop}[Coupled 1st Order to 2nd Order
]\label{thm:to2ndOrd}
Assuming both prey and predator populations are non-extinct within a given period of time, the populations satisfy the following second order
difference equations:
\begin{eqnarray}
\vec \alpha_n & \ = \ & (L_a + L_b) \vec \alpha_{n - 1} - L_b L_a \vec \alpha_{n -2} - m k^2 \vec \alpha_{n - 2} \nonumber
\\
\vec \beta_n & \ = \ & (L_b + L_a) \vec\beta_{n - 1} - L_a L_b \vec\beta_{n -2} - m k^2 \vec\beta_{n - 2}.
\end{eqnarray}
For the complex model, we have equivalently
\begin{eqnarray}
\vec \alpha_n & \ = \ & i(L_a + L_b) \vec \alpha_{n - 1} - L_b L_a \vec \alpha_{n -2} - m k^2 \vec \alpha_{n - 2} \nonumber
\\
\vec \beta_n & \ = \ & i(L_b + L_a) \vec\beta_{n - 1} - L_a L_b \vec\beta_{n -2} - m k^2 \vec\beta_{n - 2}.
\end{eqnarray}
\end{prop}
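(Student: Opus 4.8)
The plan is to reduce the coupled first-order system to a pair of \emph{uncoupled} second-order recurrences by the standard advance-and-substitute elimination: from one of the two equations I solve for the partner population vector, advance the index of the other equation by one step, and substitute to remove the partner entirely. Two features demand attention throughout. First, $L_a$ and $L_b$ need not commute, so the order in which the ordered products $L_bL_a$ and $L_aL_b$ are generated must be tracked and cannot be rearranged. Second, in the complex model each first-order step carries a factor of $i$, so a term built from two composed steps accrues $i^2=-1$; keeping these factors straight is the crux of the argument.

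First I would handle the real model. Starting from $\vec\alpha_{n+1}=L_a\vec\alpha_n+km\vec\beta_n$, I advance the index to obtain $\vec\alpha_{n+2}=L_a\vec\alpha_{n+1}+km\vec\beta_{n+1}$ and substitute the prey equation $\vec\beta_{n+1}=L_b\vec\beta_n-k\vec\alpha_n$, which trades $\vec\beta_{n+1}$ for $\vec\beta_n$ and $\vec\alpha_n$. The leftover $\vec\beta_n$ is then eliminated using the predator equation solved as $km\vec\beta_n=\vec\alpha_{n+1}-L_a\vec\alpha_n$; multiplying \emph{on the left} by $L_b$ is what produces the ordered product $L_bL_a$ (rather than $L_aL_b$), and collecting terms gives $\vec\alpha_{n+2}=(L_a+L_b)\vec\alpha_{n+1}-L_bL_a\vec\alpha_n-mk^2\vec\alpha_n$. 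Re-indexing $n\mapsto n-2$ yields the displayed $\vec\alpha$ recurrence. The $\vec\beta$ recurrence follows from the mirror-image computation, advancing the prey equation and substituting the predator equation; since the two matrices swap roles, the product that appears is $L_aL_b$, which explains the asymmetry between the two displayed lines.

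For the complex model the same elimination applies verbatim with $L_a,L_b$ replaced by $iL_a,iL_b$ in the first-order relations, and this is where I expect the main obstacle. The linear term acquires a single factor of $i$, giving $i(L_a+L_b)\vec\alpha_{n-1}$ as claimed, and the $mk^2$ term (which contains no Leslie matrix) keeps its sign. The quadratic-in-$L$ term, however, is assembled from two composed steps: solving $km\vec\beta_n=\vec\alpha_{n+1}-iL_a\vec\alpha_n$ and forming $i\,km\,L_b\vec\beta_n$ produces the contribution $-i^2\,L_bL_a\vec\alpha_n=+L_bL_a\vec\alpha_n$. Carried out carefully, the elimination therefore delivers $\vec\alpha_n=i(L_a+L_b)\vec\alpha_{n-1}+L_bL_a\vec\alpha_{n-2}-mk^2\vec\alpha_{n-2}$, i.e. a \emph{plus} sign on the ordered product, and likewise $+L_aL_b$ for $\vec\beta$. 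I would flag this against the displayed $-L_bL_a$ and $-L_aL_b$: the sign of the quadratic term is exactly the place where the $i^2=-1$ bookkeeping bites, and it should be reconciled before the statement is finalized. Once the factors of $i$ are settled, a final re-indexing $n\mapsto n-2$ completes the proof in both cases just as in the real model.
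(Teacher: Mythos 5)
Your elimination argument is exactly the ``elementary substitutions'' the paper alludes to (it gives no written proof of this proposition), and your real-case derivation, including the careful tracking of the ordered products $L_bL_a$ versus $L_aL_b$, reproduces the stated recurrences correctly. Your flag on the complex case is also warranted: substituting $km\vec\beta_n=\vec\alpha_{n+1}-iL_a\vec\alpha_n$ into $ikmL_b\vec\beta_n$ produces $-i^2L_bL_a\vec\alpha_n=+L_bL_a\vec\alpha_n$, so the correct complex recurrences carry $+L_bL_a$ and $+L_aL_b$, not the minus signs printed in the statement. Note that this sign error propagates: the characteristic equation $\Lambda^2-i(\rho+1)L_b\Lambda+\rho L_b^2+mk^2I=0$ used later in the dominant-eigenvalue theorem is consistent with the (incorrect) printed recurrence, and should read $-\rho L_b^2$ in place of $+\rho L_b^2$ once the sign is fixed.
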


The coupled second order differences equation can be solved using generating functions under the assumption that the Leslie matrices of the two populations are constant multiples of each other

\begin{theorem} [Generating Function of the Predator Population in the Real Case]
    Let $\vec \alpha_n$ be the predator population vector in the real Leslie predator-prey model where
    $L_a \ = \ \rho L$ and $L_b \ = \ L$.
    The generating function of $\vec \alpha_n$ is
    \begin{equation}\label{eqn:genFunc}
        G(x) \ = \
\left[\left(\rho L + m k^2 I - (\rho + 1) L x\right)\vec \alpha_0 + (\rho L + m k^2 I) x \vec  \alpha_1\right]\left[x^2 I - x(\rho + 1) L + \rho L^2 + m k^2 I\right]^{-1}.
    \end{equation}
\end{theorem}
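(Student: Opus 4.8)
The plan is to treat $G(x) = \sum_{n \ge 0} \vec\alpha_n x^n$ as a generating function with vector coefficients and to convert the second-order matrix recurrence of Proposition \ref{thm:to2ndOrd} into a single algebraic equation for $G$. First I would specialize that recurrence to the hypothesis $L_a = \rho L$, $L_b = L$. Since $L_a + L_b = (\rho+1)L$ and $L_bL_a = L_aL_b = \rho L^2$ (the two matrices commute because both are multiples of $L$), the recurrence collapses to
\begin{equation}
\vec\alpha_n \ = \ (\rho+1)L\,\vec\alpha_{n-1} - \left(\rho L^2 + mk^2 I\right)\vec\alpha_{n-2}, \qquad n \ge 2.
\end{equation}
This commutativity is exactly what the scalar-multiple hypothesis buys us, and it is what lets the final answer be packaged as a rational expression in the single matrix $L$.

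Next I would multiply this identity by $x^n$ and sum over $n \ge 2$. Writing the shifted sums as $\sum_{n\ge2}\vec\alpha_{n-1}x^n = x\,(G(x)-\vec\alpha_0)$ and $\sum_{n\ge2}\vec\alpha_{n-2}x^n = x^2 G(x)$, and keeping the omitted low-order terms $\vec\alpha_0 + \vec\alpha_1 x$ separate, I obtain a relation of the shape
\begin{equation}
P(x)\,G(x) \ = \ Q(x),
\end{equation}
where $P(x)$ is an explicit matrix polynomial with $P(0) = I$ and $Q(x)$ is a degree-one vector polynomial built from $\vec\alpha_0$ and $\vec\alpha_1$. The only care needed here is bookkeeping: each matrix coefficient must stay to the left of the vectors, and one must track precisely which initial data survives the reindexing.

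Then I would invert $P(x)$. Because $P(0) = I$, the matrix $P(x)$ is invertible over the ring of formal power series (equivalently, for all sufficiently small $|x|$), so $G(x) = P(x)^{-1}Q(x)$ is well defined; this is the step that legitimizes writing a matrix ``denominator'' inverse. The remaining task is purely algebraic: rearrange $P(x)^{-1}Q(x)$ into the exact closed form \eqref{eqn:genFunc}, matching the stated numerator $\big(\rho L + mk^2 I - (\rho+1)Lx\big)\vec\alpha_0 + (\rho L + mk^2 I)x\,\vec\alpha_1$ against the denominator matrix $x^2 I - x(\rho+1)L + \rho L^2 + mk^2 I$.

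I expect the main obstacle to be this last matching step rather than anything conceptual. One must reconcile the naive numerator and denominator produced by the direct summation with the specific polynomial ordering in the statement; in particular, the stated denominator is the reciprocal polynomial $x^2 P(1/x)$ of the $P(x)$ that the direct computation produces, so an implicit substitution or summation convention must be pinned down, and the product in \eqref{eqn:genFunc}, written as (vector)(matrix)$^{-1}$, must be read as the matrix inverse acting on the numerator from the left. Fixing the generating-function convention and the reindexing offsets at the very start will make this reconciliation routine, and the commutativity of $L_a$ and $L_b$ guarantees that the scalar-style factoring of $P(x)$ is valid throughout.
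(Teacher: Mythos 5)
Your proposal follows essentially the same route as the paper: specialize the recurrence of Proposition \ref{thm:to2ndOrd} to the commuting case $L_a = \rho L$, $L_b = L$, multiply $G(x)$ by the quadratic matrix polynomial, and solve $P(x)G(x) = Q(x)$ for $G$ --- the paper's proof does exactly this, merely asserting the resulting identity and leaving the verification to the reader. Your observation that the stated denominator is the reciprocal polynomial $x^2 P(1/x)$ of what the convention $G(x) = \sum_{n\ge 0}\vec\alpha_n x^n$ actually produces (and that the numerator coefficients, e.g.\ $\rho L$ versus $\rho L^2$, must be reconciled) is an accurate diagnosis of a convention/typo issue in the statement itself rather than a gap in your argument.
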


\begin{proof}
    From the recurrence relation provided in Proposition \ref{thm:to2ndOrd},
    we have the following identity:
    \begin{equation}
        \left[x^2 - x (\rho + 1) L + (\rho L^2 + mk^2)\right]G(x) \ = \
        -(\rho + 1) L \vec \alpha_0 x
        +(\rho L^2 + mk^2) \vec \alpha_0
        + (\rho L^2  + m k^2) \vec \alpha_1 x.
    \end{equation}
    This identity can be verified by substituting $G(x)$ and imposing the appropriate conditions on $\alpha_n$. The expansion on the left-hand side has residues for terms that have an $x$ power less than or equal to 2. Solving for $G(x)$ yields the desired result.
\end{proof}

Using partial fraction decomposition, it is possible to obtain a closed form expression for $\vec \alpha_n$.

\begin{theorem} [Formula for $\vec \alpha_n$]
    When $n \ > \  0$,
    \begin{equation}
    \vec \alpha_n \ = \
    \frac {
(L^2 \rho + k^2 m)^{n - 2}
    } {\sqrt D}
    \left[
        \left(
            (k^2m + L^2 \rho) \vec \alpha_1 - L(1 + \rho)
\left(
            k^2m + L^2\rho
        \right)
            \vec \alpha_0
        \right)\delta_{n - 1}
        + \left(
            k^2m + L^2\rho
        \right)^2\vec \alpha_0 \delta_{n}
    \right],
    \end{equation}
    where $D$ is defined by
    \begin{equation}
        D \ = \ L ^2 (1 + \rho)^2 - 4 (mk^2 + \rho L^2),
    \end{equation}
    and the sequence $\delta_n$ is defined by
    \begin{equation}
        \delta_n \ = \
        \left(
            \frac 1 2
        \right)^n
        \left(
                L^2 \rho + k^2 m
        \right)^{-n}
        \left[
            \sum_{\substack{l \ = \  1 \\ l \textnormal{ odd}}}^{n + 1}
            \binom{n + 1}{l}
            [L(1+\rho)]^{t + 1 - l}(\sqrt D)^l
        \right].
    \end{equation}
\end{theorem}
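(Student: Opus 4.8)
The plan is to read off $\vec\alpha_n$ as the coefficient of $x^n$ in the generating function $G(x)$ from the previous theorem, which is a rational function of $x$ whose numerator is an affine polynomial $A + Bx$ (with matrix coefficients $A,B$ assembled from $\vec\alpha_0,\vec\alpha_1,L,\rho,mk^2$) and whose denominator is the quadratic $q(x) = x^2 I - (1+\rho)Lx + (\rho L^2 + mk^2 I)$. The decisive structural observation is that every matrix in sight -- the numerator coefficients and $q(x)$ -- is a polynomial in the single matrix $L$, so they all commute; consequently the entire computation can be carried out in the commutative ring $\mathbb{R}[L]$ (equivalently, after diagonalizing $L$, one scalar eigenvalue at a time), reducing the matrix coefficient extraction to a purely scalar partial-fraction problem.

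Next I would factor the denominator. Viewing $q(x)$ as a monic quadratic in $x$ with scalar coefficients (the eigenvalues of $L$), its roots are $x_\pm = \tfrac12\big((1+\rho)L \pm \sqrt D\big)$ with $D = L^2(1+\rho)^2 - 4(mk^2 + \rho L^2)$, exactly the discriminant named in the statement; these satisfy $x_+ + x_- = (1+\rho)L$, $x_+ x_- = \rho L^2 + mk^2$, and $x_+ - x_- = \sqrt D$. Performing the partial-fraction split $q(x)^{-1} = (\sqrt D)^{-1}\big((x - x_+)^{-1} - (x-x_-)^{-1}\big)$ and expanding each simple factor as a geometric series about $x=0$ gives the clean coefficient
\[
  [x^n]\,q(x)^{-1} \ = \ \frac{x_+^{\,n+1} - x_-^{\,n+1}}{\sqrt D\,(x_+ x_-)^{n+1}}.
\]
Then $\vec\alpha_n = A\,[x^n]q^{-1} + B\,[x^{n-1}]q^{-1}$, which already exhibits the $\delta_n$ and $\delta_{n-1}$ terms once $A$ and $B$ are substituted.

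To recognize the bracketed sum defining $\delta_n$, I would expand the Binet-type expression $x_+^{n+1} - x_-^{n+1}$ by the Binomial Theorem: with $x_\pm = \tfrac12\big((1+\rho)L \pm \sqrt D\big)$, the even powers of $\sqrt D$ cancel in the difference and the odd powers double, producing precisely $2^{-n}\sum_{l\textnormal{ odd}}\binom{n+1}{l}[(1+\rho)L]^{n+1-l}(\sqrt D)^l$, i.e. the quantity that, after the factor $(\rho L^2 + mk^2)^{-n}$, equals $\delta_n$. Substituting this identity, pulling the common power $(\rho L^2 + mk^2)^{n-2}$ out front, and collecting the $\vec\alpha_0$ and $\vec\alpha_1$ contributions against $\delta_n$ and $\delta_{n-1}$ yields the stated closed form; the odd-power structure guarantees that every surviving $\sqrt D$ appears at an even total power except for the single overall $1/\sqrt D$, so no spurious matrix square roots remain.

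The main obstacle I anticipate is not the algebra but making the scalar reduction legitimate at the matrix level: $\sqrt D$ and $x_\pm$ are not polynomials in $L$, so the factorization $q(x) = (x-x_+)(x-x_-)$ lives in an extension and must be justified either by diagonalizing $L$ and arguing eigenvalue-by-eigenvalue (handling repeated eigenvalues or non-diagonalizable $L$ via a density argument, or by noting the end result is a polynomial identity in the entries of $L$) or by checking directly that the final expression -- manifestly a polynomial combination of $L$ times a single $1/\sqrt D$ multiplied by an odd polynomial in $\sqrt D$ -- is the true solution, i.e. satisfies $q(x)G(x) = A + Bx$. I would close by verifying the base cases $n=1,2$ and confirming the $\sqrt D$ terms cancel, which simultaneously removes any ambiguity in the $\pm\sqrt D$ branch choice and certifies the formula.
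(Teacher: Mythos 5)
Your proposal follows exactly the route the paper takes---partial fraction decomposition of the generating function from the preceding theorem---except that the paper's proof is a single sentence and yours supplies all the supporting detail (the reduction to the commutative ring $\mathbb{R}[L]$, the factorization of the denominator via $x_\pm = \tfrac12\bigl((1+\rho)L \pm \sqrt D\bigr)$, and the Binet-type binomial expansion whose odd-power terms produce $\delta_n$). The argument is correct, and your attention to justifying $\sqrt D$ and the factorization at the matrix level addresses a point the paper silently glosses over.
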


\begin{proof}
    The derivation follows by applying partial fraction decomposition to the previous
    generating function
    in Equation \ref{eqn:genFunc}.
\end{proof}

Though the theorem provides a closed-form expression for the predator population, the complexity of the formula poses difficulties in determining the optimal predation rate for maximal growth.

\subsection{Real-Valued Predator-Prey Model with Scalar $L$}\label{sec:scalarModel}

The following three propositions model
the predator and prey populations when the dimension of the Leslie matrix is 1; that is, the population growth is characterized by an exponential of a scalar without interaction.
 To emphasize their scalarity, we write $l_a \ < \  1$ and $l_b \ > \  1$ instead of $L_a$ and $L_b$.

\begin{theorem}[Eigenvalues of the Companion Matrix]\label{thm:simpleEval}
Using Proposition \ref{thm:to2ndOrd}, we write the companion matrix that describes the two populations:
\begin{equation}\label{eqn:1DcompMat}
    \twobytwoMat(
        l_a + l_b, -l_a l_b -k^2 m, 1, 0
    ).
\end{equation}
The eigenvalues of this matrix are purely real if and only if
\begin{equation}
    k \ \leq\  \frac {l_a - l_b} {2\sqrt{m}}.
\end{equation}
Otherwise, the eigenvalues are complex conjugates of each other.
\end{theorem}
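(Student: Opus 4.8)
The plan is to reduce the entire question to the sign of the discriminant of a single quadratic. A companion matrix of the form $\begin{bmatrix} a & b \\ 1 & 0\end{bmatrix}$ has characteristic polynomial $x^2 - ax - b$, so the matrix in \eqref{eqn:1DcompMat}, with $a = l_a + l_b$ and $b = -(l_a l_b + k^2 m)$, has
\[
\chi(x) \ = \ x^2 - (l_a + l_b)\,x + (l_a l_b + m k^2).
\]
Since this is a real quadratic, its two roots are either both real or form a pair of complex conjugates, and this dichotomy is governed entirely by the sign of the discriminant. Thus the whole statement is equivalent to determining precisely when that discriminant is nonnegative, which is the only nontrivial content.

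Next I would compute the discriminant directly:
\[
\Delta \ = \ (l_a + l_b)^2 - 4(l_a l_b + m k^2) \ = \ (l_a - l_b)^2 - 4 m k^2,
\]
where the cross term $2 l_a l_b$ cancels against $-4 l_a l_b$. The eigenvalues are purely real exactly when $\Delta \ge 0$, i.e. when $(l_a - l_b)^2 \ge 4 m k^2$, and are complex conjugates when $\Delta < 0$; the boundary $\Delta = 0$ is where the two real eigenvalues coincide.

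Finally I would solve this inequality for the positive parameter $k$. Taking square roots (legitimate since $m > 0$ and $k > 0$) gives $2 k \sqrt{m} \le |l_a - l_b|$, that is,
\[
k \ \le \ \frac{|l_a - l_b|}{2\sqrt{m}}.
\]
There is essentially no analytic obstacle here; the one point demanding care is the sign in the stated threshold. Under the running hypotheses $l_a < 1 < l_b$ we have $l_a - l_b < 0$, so the correct bound must be written with $|l_a - l_b| = l_b - l_a$ rather than the literal $l_a - l_b$ (otherwise the right-hand side is negative and no positive $k$ could ever satisfy the inequality). With this clarification the claimed equivalence follows immediately from the discriminant computation.
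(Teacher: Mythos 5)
Your proposal is correct and takes essentially the same route as the paper: compute the characteristic quadratic $x^2 - (l_a+l_b)x + (l_a l_b + mk^2)$ and classify the eigenvalues by the sign of its discriminant $(l_a-l_b)^2 - 4mk^2$. You additionally catch a real defect that the paper's own proof reproduces: since the running hypotheses are $l_a < 1 < l_b$, the threshold must be written as $|l_a - l_b|/(2\sqrt{m}) = (l_b - l_a)/(2\sqrt{m})$, not the literal $(l_a - l_b)/(2\sqrt{m})$ (and the paper's displayed discriminant $\frac{(l_a+l_b)^2}{4} - k^2 m + l_a l_b$ has a sign typo, as it should read $\frac{(l_a+l_b)^2}{4} - k^2 m - l_a l_b$).
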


\begin{proof}
    The characteristic equation of the companion matrix is
    \begin{equation}
        \lambda ^2 - (l_a + l_b) \lambda + k^2m + l_a l_b.
    \end{equation}
    For both eigenvalues to be purely real, the discriminant $D$ of this polynomial must be nonnegative:
    \begin{equation}
        \frac D 4 \ :=\ \frac {(l_a + l_b)^2} 4 - k^2m + l_a l_b \ \geq \ 0.
    \end{equation}
    Using elementary algebra, we obtain
    \begin{equation}
    k \ \leq\ \frac {l_a - l_b} {2\sqrt{m}}.
    \end{equation}
    Otherwise, if $D/4 \ < \  0$, the eigenvalues have an imaginary part, and the two eigenvalues are complex conjugates of each other.
\end{proof}

\begin{theorem}[Exponential Growth of
Population for Small Predation Rate]
The following condition guarantees that neither predator nor prey populations vanish as $n \rightarrow \infty$:
\begin{equation}
    k \ \leq \ \sqrt{
        \frac {(1 - l_b)(l_a - 1)} {m}
    }.
\end{equation}
\end{theorem}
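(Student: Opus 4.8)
The plan is to reduce the claim to a single inequality on the dominant eigenvalue of the scalar recurrence, and to read that eigenvalue off from the sign of the characteristic polynomial at $\lambda = 1$. By Proposition \ref{thm:to2ndOrd}, in the scalar case both $\alpha_n$ and $\beta_n$ satisfy the same second-order linear recurrence
\begin{equation}
    \alpha_n \ = \ (l_a + l_b)\,\alpha_{n-1} - (l_a l_b + k^2 m)\,\alpha_{n-2},
\end{equation}
whose characteristic polynomial is precisely that of the companion matrix of Theorem \ref{thm:simpleEval}, namely
\begin{equation}
    p(\lambda) \ = \ \lambda^2 - (l_a + l_b)\lambda + (l_a l_b + k^2 m).
\end{equation}
Writing $\lambda_\pm$ for its roots, each population is a linear combination $A\lambda_+^n + B\lambda_-^n$, so a population fails to vanish exactly when the spectral radius $\max(|\lambda_+|,|\lambda_-|)$ is at least $1$ and the initial data project nontrivially onto the dominant mode. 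Thus I only need to show that the hypothesis forces a real root that is at least $1$.

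The key computation is to evaluate $p$ at $\lambda = 1$:
\begin{equation}
    p(1) \ = \ 1 - (l_a + l_b) + l_a l_b + k^2 m \ = \ (1 - l_a)(1 - l_b) + k^2 m \ = \ k^2 m - (1 - l_b)(l_a - 1),
\end{equation}
using $(1-l_a)(1-l_b) = -(1-l_b)(l_a-1)$. Since $l_a < 1 < l_b$ makes the radicand positive, the hypothesis $k \leq \sqrt{(1-l_b)(l_a-1)/m}$ may be squared and is then \emph{exactly} equivalent to $p(1) \leq 0$.

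To conclude, I would use that $p$ is monic with $p(\lambda)\to +\infty$ as $\lambda\to+\infty$; since $p(1)\leq 0$, the intermediate value theorem yields a real root $\lambda_+ \geq 1$ together with a companion real root $\lambda_- \leq 1$. In particular the roots are automatically real, which is consistent with Theorem \ref{thm:simpleEval}: by AM--GM, $\sqrt{(1-l_b)(l_a-1)} = \sqrt{(l_b-1)(1-l_a)} \leq \frac{(l_b-1)+(1-l_a)}{2} = \frac{l_b-l_a}{2}$, so the admissible range of $k$ lies inside the real-eigenvalue regime. The dominant eigenvalue therefore satisfies $\lambda_+ \geq 1$, and both $\alpha_n$ and $\beta_n$ grow at least like $\lambda_+^n \geq 1$, so neither vanishes as $n \to \infty$.

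The step I expect to be the genuine obstacle is not the eigenvalue estimate, which is essentially the evaluation of $p(1)$, but the verification that the positive initial data project with nonzero (indeed positive) coefficient onto the $\lambda_+$ mode for \emph{both} species; since $l_a < 1$ the predator survives only through the prey coupling, so this is where the work lies. I would make it rigorous by diagonalizing the companion matrix, expressing $A$ and $B$ in terms of $(\alpha_0,\alpha_1)$ and $(\beta_0,\beta_1)$, and fixing the sign of the $\lambda_+^n$ coefficient using $\lambda_- \leq 1 \leq \lambda_+$. Some care is also needed at the boundary $k = \sqrt{(1-l_b)(l_a-1)/m}$, where $\lambda_+ = 1$ and the populations tend to a nonzero constant rather than growing; this still counts as non-vanishing, which is why the non-strict inequality in the statement is the correct one.
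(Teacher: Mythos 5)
Your proposal is correct, and it reaches the same target inequality as the paper, but by a cleaner and strictly stronger route. The paper's proof \emph{assumes} the discriminant of the companion matrix \eqref{eqn:1DcompMat} is nonnegative, writes the dominant eigenvalue as $\tfrac{l_a+l_b}{2}+\tfrac{\sqrt D}{2}$, and asserts that requiring this to be $\geq 1$ gives the stated bound on $k$; the intermediate algebra and the justification of the realness assumption are omitted. You instead test the characteristic polynomial at $\lambda=1$: the identity $p(1)=k^2m-(1-l_b)(l_a-1)$ makes the hypothesis \emph{exactly} equivalent to $p(1)\leq 0$, and since $p$ is monic with positive root-sum and root-product, this forces real roots with $\lambda_-\leq 1\leq\lambda_+$. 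This buys you two things the paper does not supply: the realness of the eigenvalues is derived rather than assumed (your AM--GM check that the admissible $k$-range sits inside the real-eigenvalue regime of Theorem \ref{thm:simpleEval} is a nice independent confirmation, modulo the sign typo $l_a-l_b$ versus $l_b-l_a$ in that theorem's statement), and the equivalence with $p(1)\leq 0$ shows the bound is sharp. You are also right to flag the genuine remaining gap, which the paper's proof shares: having $\lambda_{\max}\geq 1$ is only necessary for survival, and sufficiency requires showing that the positive initial data $(\alpha_0,\alpha_1)$ and $(\beta_0,\beta_1)$ load the $\lambda_+$ mode with a positive coefficient (and, for the predator with $l_a<1$, that the coupling actually produces this). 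Neither your sketch nor the paper carries out that verification, so on that point the two arguments are equally incomplete; your proposal at least names the issue and gives a workable plan (diagonalize the companion matrix and solve for the mode coefficients).
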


\begin{proof}
    We assume that the discriminant of the companion matrix (Equation \ref{eqn:1DcompMat}) is a non-negative real value.
    Then the dominant eigenvalue must be
    \begin{equation}
        \frac {l_a + l_b} 2 + \frac {\sqrt D} 2,
    \end{equation}
    which must be greater or equal to 1 for both of the population to not vanish.
\end{proof}

\begin{theorem}[Extinction in the Case of Complex Eigenvalues]
If
\begin{equation}\label{eqn:simpleVanish}
    k \ > \  \frac {l_a - l_b} {2\sqrt{m}},
\end{equation}
then the population 
is guaranteed to go extinct.
\end{theorem}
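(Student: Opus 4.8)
The plan is to read the dynamics straight off the companion matrix of Theorem \ref{thm:simpleEval} and then exploit the fact that the model declares a species extinct the instant its population becomes negative. By Proposition \ref{thm:to2ndOrd}, in the scalar setting the predator count satisfies
\[
    \alpha_n \ = \ (l_a + l_b)\alpha_{n-1} - (l_a l_b + m k^2)\alpha_{n-2},
\]
whose characteristic polynomial is exactly the one attached to \eqref{eqn:1DcompMat}. The hypothesis \eqref{eqn:simpleVanish} places us in the complementary regime of Theorem \ref{thm:simpleEval}, i.e. the discriminant $D = (l_a - l_b)^2 - 4 m k^2$ is strictly negative, so the two eigenvalues are genuine complex conjugates $\lambda_\pm = r e^{\pm i \theta}$ with modulus $r = \sqrt{l_a l_b + m k^2} > 0$ and argument $\theta \in (0,\pi)$. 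Isolating this structural fact is the first step: the argument is nonzero precisely because $D < 0$, and it lies below $\pi$ because the trace $l_a + l_b$ is positive, fixing the sign of the real part of $\lambda_\pm$.

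Next I would pass to amplitude-phase form. Diagonalizing the recurrence gives $\alpha_n = A \lambda_+^n + \bar A \lambda_-^n = 2|A|\, r^n \cos(n\theta + \varphi)$, where $A = |A| e^{i\varphi}$ is determined by the initial data $\alpha_0,\alpha_1$. Since the initial population is a positive real number, $A \neq 0$, hence $|A| > 0$ and the sign of $\alpha_n$ is governed entirely by the cosine factor; the positive quantity $r^n$ never affects the sign, so it is irrelevant whether the amplitude grows or decays.

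The heart of the argument is the elementary claim that $\cos(n\theta + \varphi) < 0$ for some $n \in \mathbb{Z}^+$, regardless of $\varphi$. I would prove this from the observation that the consecutive phases $n\theta + \varphi$ advance by a fixed step $\theta \in (0,\pi)$. If $\theta/\pi$ is irrational, the sequence $\{n\theta \pmod{2\pi}\}$ is dense, so it enters the open arc on which cosine is negative. If $\theta/\pi$ is rational, write $\theta = 2\pi p/q$ in lowest terms; the constraint $\theta \in (0,\pi)$ forces $q \geq 3$, so the phases visit $q$ equally spaced points whose gaps all have length $2\pi/q \leq 2\pi/3 < \pi$. An open arc of length $\pi$ cannot be contained in a gap of length strictly less than $\pi$, so the cosine-negative arc must contain one of the points. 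Either way there is an $n$ with $\alpha_n = 2|A| r^n \cos(n\theta + \varphi) < 0$, and by the extinction convention the predator is declared extinct at that time. Because $\vec\beta_n$ obeys a recurrence with the identical characteristic polynomial, the same reasoning forces the prey to go negative as well.

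The main obstacle is making this sign-oscillation claim fully airtight, in particular the rational case with small $\theta$ and a phase $\varphi$ chosen to keep the first several iterates inside the cosine-positive region: one must confirm that the accumulating phase cannot remain trapped in a closed semicircle indefinitely, which is exactly the gap-length pigeonhole above. Everything else, namely extracting the recurrence, identifying $D < 0$ with complex eigenvalues, and passing to amplitude-phase form, is routine linear algebra.
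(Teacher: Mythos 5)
Your proposal follows the same skeleton as the paper's proof: pass to the scalar companion matrix, observe that the hypothesis makes the discriminant negative so the eigenvalues are complex conjugates $r e^{\pm i\theta}$, write $\alpha_n$ in amplitude--phase form, and conclude that the cosine factor must eventually go negative, triggering the extinction convention. The route is therefore not different, but your version is tighter at exactly the two points where the paper's own argument is shaky. First, from the real initial data $\alpha_0, \alpha_1 > 0$ the paper deduces $\nu_1 = \nu_2 = \nu/2 > 0$ with $\nu$ real; what actually follows is only $\nu_2 = \overline{\nu_1}$, so the correct general form is $\alpha_n = 2|A|\,r^n\cos(n\theta + \varphi)$ with a nonzero phase $\varphi$, which is what you write. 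Second, once the phase is present, the assertion that $\cos(n\theta+\varphi)<0$ for some $n$ is no longer immediate, and your equidistribution/pigeonhole argument (dense orbit in the irrational case; $q\ge 3$ equally spaced points with gaps $2\pi/q<\pi$ in the rational case, so the open cosine-negative arc of length $\pi$ must capture one of them) is precisely the missing lemma -- the paper simply asserts the existence of such an $n$. Your identification of $r=\sqrt{l_a l_b + mk^2}$ and $D=(l_a-l_b)^2-4mk^2$ is also correct, whereas the paper's displayed discriminant has a sign slip. In short: same approach, but your write-up repairs the paper's proof rather than merely reproducing it.
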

\begin{proof}
    It follows trivially that condition (\ref{eqn:simpleVanish})
    implies the dominant eigenvalue is complex. Furthermore, the real part of the root is $(l_a + l_b)/2$, which is guaranteed to be positive. Let the two eigenvalues of the companion matrix be $\gamma$ and $\bar \gamma$, with
    \begin{equation}
    \gamma \ = \ r e^{i \theta} \hspace{3mm} \textrm{and} \hspace{3mm} \ \bar \gamma \ = \ r e^{-i \theta},
    \end{equation}
    where $r \ > \  0$ and $\theta \in (0, \pi/2)$. By Proposition \ref{thm:to2ndOrd}, we note that the predator population at time $n$ can be written as
    \begin{equation}\label{eqn:gammaPopulation}
    \alpha_n \ = \ \nu_1 \gamma^n + \nu_2 (\bar \gamma)^n.
    \end{equation}
    We also observe that the populations $\alpha_0$ and $\alpha_1$ can be assumed to take positive real values. If $\alpha_1 \ \leq \  0$, then the population has gone extinct at time 1.
    Equation \ref{eqn:gammaPopulation} for $n \ = \  0, 1$ is
    \begin{eqnarray}
    \alpha_0 & \ = \ & \nu_1 + \nu_2 \nonumber\\
    a_1 & \ = \ & \nu_1\gamma + \nu_2\bar \gamma.
    \end{eqnarray}
    Since $\alpha_0, \alpha_1 \ > \  0$, we deduce that $\nu_1 \ = \ \nu_2 \ := \ \nu/2 \ > \  0$. Finally, we rewrite the population at time $n$:
    \begin{equation}
    \alpha_n \ = \ \Re (\nu \gamma ^n) \ = \  \nu r^n \cos(n\theta).
    \end{equation}
    We know that $\theta \in (0, \pi / 2)$ and thus, there exists an integer $n$ such that $\cos(n\theta) \ < \  0$, which finishes showing that the predator population must go extinct.
\end{proof}
\hfill \qed

\subsection{The Complex-Valued Leslie Predator-Prey Model with $L_a \ = \ \rho L_b$}

To solve the second order matrix recurrence related to the predator-prey model, we solve a characteristic equation whose coefficients are matrices. Since the only matrices involved in this equation are $I$ and $L_\beta$ which commute, we can use the quadratic equation.

\begin{theorem}[Dominant Eigenvalue in the General Case]\label{eqn:MaxEvalGen}
    The population vector of the predator species in (\ref{LeslieDef}) can be characterized by
    \begin{equation}
    \vec \alpha_n \ =\ \Lambda_1^{n}  \vec v_1+ \Lambda_2^{n}\vec v_2
    \end{equation}
    for vectors $\vec v_1$ and $\vec v_2$, the eigenvectors of the system.
    . The
    growth of the predator population
    is dominated by the dominant eigenvalue
    of $\Lambda_1$. We denote the dominant eigenvalue of $L_b$ by $\lambda_{\max}$ and that of $\Lambda_1$ by $\Lambda_{\max}$. Then $\Lambda_{\max}$ has the following modulus:
    \begin{equation}
        \|\Lambda_{\max} \| \ =\   \frac {
            (\rho + 1)\lambda_{\max} + \sqrt{(\rho + 1)^2 \lambda_{\max}^2 + 4mk^2}
        }{2}.
    \end{equation}.
\end{theorem}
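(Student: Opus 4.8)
The plan is to reduce the coupled matrix recurrence to a family of scalar quadratics, solve each by the ordinary quadratic formula, and then locate the mode of largest modulus. First I would invoke Proposition \ref{thm:to2ndOrd} in the complex case and specialize to $L_a = \rho L$, $L_b = L$, so that the second-order recurrence for the predator vector reads
\begin{equation}
  \vec\alpha_n \ = \ i(\rho+1)L\,\vec\alpha_{n-1} - \big(\rho L^2 + mk^2 I\big)\vec\alpha_{n-2}.
\end{equation}
The essential structural point — and the reason the hypothesis $L_a = \rho L_b$ is imposed — is that every coefficient appearing here is a polynomial in the single matrix $L$. Hence the coefficients commute and $L$ can be put in Jordan form; in a basis adapted to $L$ the recurrence decouples into independent scalar recurrences indexed by the eigenvalues $\lambda$ of $L$, with the finitely many nondiagonalizable directions handled by the usual generalized-eigenvector bookkeeping, which does not change the characteristic roots.

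Second, for a fixed eigenvalue $\lambda$ the mode equation is the scalar quadratic obtained by substituting $\vec\alpha_n = \Lambda^n \vec v$, namely a polynomial of the form $\Lambda^2 - i(\rho+1)\lambda\,\Lambda + (\cdots) = 0$, to which the quadratic formula applies verbatim. I would then read off the two roots $\Lambda_1(\lambda)$ and $\Lambda_2(\lambda)$. The decisive observation is that the discriminant is a negative real number, so its square root is purely imaginary; consequently both roots are purely imaginary and, choosing the additive branch,
\begin{equation}
  \|\Lambda_1(\lambda)\| \ = \ \frac{(\rho+1)\lambda + \sqrt{(\rho+1)^2\lambda^2 + 4mk^2}}{2}.
\end{equation}
Writing the solution in the diagonalized form $\vec\alpha_n = \Lambda_1^n \vec v_1 + \Lambda_2^n \vec v_2$ then matches the stated decomposition.

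Third, I would establish that the global dominant mode is realized at the dominant eigenvalue $\lambda_{\max}$ of $L_b$. For this, note that the right-hand modulus above is strictly increasing in $\lambda > 0$, so among the contributions of all eigenvalues $\lambda$ of $L$ the largest modulus occurs at the eigenvalue of largest positive real value; by the Corollary following Theorem \ref{LEeq}, that eigenvalue is precisely $\lambda_{\max}$, and it dominates the moduli of all other (complex) eigenvalues of $L_b$. Substituting $\lambda = \lambda_{\max}$ yields the claimed expression for $\|\Lambda_{\max}\|$.

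I expect the main obstacle to be the reduction in the first step rather than the algebra in the second: one must argue carefully that applying the scalar quadratic formula eigendirection-by-eigendirection genuinely captures all solutions of the matrix recurrence, including when $L$ fails to be diagonalizable, and that the $\vec v_1, \vec v_2$ decomposition is legitimate. The cleanest route is simultaneous triangularization of the commuting coefficient matrices, so that the companion matrix of the recurrence becomes block-triangular with $2\times2$ diagonal blocks whose eigenvalues depend only on the eigenvalues of $L$; the dominant-eigenvalue claim then follows from the monotonicity computation together with the single-species Perron-type result. A secondary point to check is that the coefficient of the dominant mode does not vanish for generic positive real initial data, so that $\Lambda_{\max}$ actually governs the asymptotic growth.
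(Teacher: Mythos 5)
Your overall route is the same as the paper's: both arguments rest on the observation that once $L_a=\rho L_b$, every coefficient of the second-order recurrence is a polynomial in the single matrix $L$, so the coefficients commute and the quadratic formula applies. The paper solves the matrix quadratic $\Lambda^2 - i(\rho+1)L_b\Lambda + (\cdots) = 0$ directly for matrix roots $\Lambda_1,\Lambda_2$; you diagonalize/triangularize first and solve a scalar quadratic per eigenvalue of $L$, then select the dominant mode by monotonicity. Your version is more careful about the reduction (non-diagonalizable $L$, which eigenvalue of $L$ wins), which the paper glosses over, so on that front you are ahead of the source.

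There are, however, two concrete gaps. First, the algebra in your second step does not close. From the recurrence you wrote, $\vec\alpha_n = i(\rho+1)L\,\vec\alpha_{n-1} - (\rho L^2+mk^2 I)\vec\alpha_{n-2}$, the scalar mode equation is $\Lambda^2 - i(\rho+1)\lambda\Lambda + (\rho\lambda^2+mk^2)=0$, whose discriminant is $-\bigl[(\rho+1)^2\lambda^2 + 4\rho\lambda^2 + 4mk^2\bigr]$; the modulus of the additive root is then $\tfrac12\bigl((\rho+1)\lambda + \sqrt{(\rho+1)^2\lambda^2 + 4\rho\lambda^2 + 4mk^2}\,\bigr)$, which carries an extra $4\rho\lambda^2$ under the radical relative to the stated formula. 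Hiding the constant term behind ``$(\cdots)$'' conceals exactly the term that breaks the match. (The paper has the same internal inconsistency: its proof produces $\sqrt{(1-\rho)^2\lambda^2+4mk^2}$ while its statement claims $\sqrt{(\rho+1)^2\lambda^2+4mk^2}$; if you rederive the second-order recurrence from the complex first-order system you find the $L_bL_a$ term enters with a $+$ sign, since $(iL_b)(iL_a)=-L_bL_a$, which yields the proof's $(1-\rho)^2$ version. Either way you must do this bookkeeping explicitly rather than assert the target formula.) Second, you defer the non-vanishing of the dominant coefficient $\vec v_1$ as ``a secondary point,'' but without it the theorem's claim that $\Lambda_{\max}$ governs the growth is unproved. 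The paper closes this with a short argument you should reproduce: both $\Lambda_1$ and $\Lambda_2$ are purely imaginary, so if $\vec v_1=0$ then $\vec\alpha_1=\Lambda_2\vec\alpha_0$ would be purely imaginary, contradicting $\vec\alpha_1 = iL_a\vec\alpha_0 + km\vec\beta_0$, whose real part $km\vec\beta_0$ is strictly positive. Finally, your claim that the real eigenvalue $\lambda_{\max}$ beats the complex eigenvalues of $L$ needs an actual estimate (e.g., a triangle-inequality bound on $|\Lambda_1(\lambda)|$ in terms of $|\lambda|<\lambda_{\max}$); monotonicity in real $\lambda>0$ alone does not cover the complex modes.
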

\begin{proof}
    It is possible to solve for $\Lambda_1, \Lambda_2$ directly.
We wish to find a matrix $\Lambda$ such that
\begin{equation}
\Lambda^2 - i(\rho + 1) L_b \Lambda + \rho L_b^2+ mk^2 I \ = \ 0.
\end{equation}
    Applying the quadratic formula yields
    \begin{eqnarray}
        \Lambda_1 &\ = \ &
        \frac {
        (1 + \rho) L_b^2 +
        \sqrt{
        (1-\rho)^2L_b^2 + 4mk^2
        }
        } 2 i \nonumber \\
 \Lambda_2 & \ = \ &
        \frac {
        (1 + \rho) L_b^2 -
        \sqrt{
        (1-\rho)^2L_b^2 + 4mk^2
        }
        } 2 i.
    \end{eqnarray}
    Note that the magnitude of $\Lambda_1$ is greater than that of $\Lambda_2$.
    We approximate the population of the predator species at the limit as $n \rightarrow \infty$:
    \begin{eqnarray}
        P_{a, n} \ = \ \|\vec a_n\| \ = \
        \|\Lambda_1\|^n \|\vec v_1\| +
 \|\Lambda_2\|^n \|\vec v_2\|
 \ \approx  \
\|\Lambda_1\|^n \|\vec v_1\|.
    \end{eqnarray}
    It remains to show that the vector $\vec v_1$ is nonzero. Let us assume for contradiction that $\vec v_1 \ = \ (0, \dots, 0)^T$. Then we can write the predator populations at times 0, 1 as
    \begin{equation}
    \vec \alpha_0 \ = \ \vec v_2 \hspace{3mm} \textrm{and} \hspace{3mm} \vec \alpha_1 \ = \ \Lambda_2 \vec v_2,
    \end{equation}
    which indicates that
    \begin{equation}
    \vec \alpha_1 \ = \ \Lambda_2 \vec \alpha_0.
    \end{equation}
    Since $\Lambda_2$ is purely imaginary, $\alpha_1$ is therefore also purely imaginary.
    However, the initial condition of the model in Definition \ref{thm:complexModel}
    dictates that each entry of $\vec \alpha_0$ and $\vec \beta$ is positive and real and that
    \begin{equation}
    \vec \alpha_1 \ = \ iL_{\alpha} \vec \alpha_n + km \vec\beta_n.
    \end{equation}
    Therefore, $\alpha_1 $ cannot be purely imaginary and we arrive at a contradiction.
\end{proof}

\section{The Competitive Model}
We can slightly modify one of the signs in the previous Leslie predator-prey model and
study the following system. Suppose there exist two populations with the same growth matrix $L$. We assume the two populations are non-vanishing without interaction; that is, $f \ \geq \ 1/N$ and $L \ = \ L_f$ has a dominant eigenvalue.

\begin{definition}[Leslie Competitive Predator-Prey  Model]
Let $\vec \alpha_n$, $\vec \beta_n$ be the population vectors
of the predator and prey species at time $n$. The competitive model is defined by the following
system of matrix difference equations:
\begin{eqnarray}\label{eqn:conditions}
    \vec \alpha_{n + 1} & \ = \ & \max(L \vec \alpha_n - k m \vec \beta_n, \vec 0) \\
    \vec \beta_{n + 1} & \ = \ & \max(L \vec \beta_n - k \vec \alpha_n, \vec 0) \nonumber,
\end{eqnarray}
where $k$ and $m$ are respectively the interaction and competitive advantage ratios, both between $0$ and $1$. The interaction ratio describes how much interaction, i.e.,, how much casualties are incurred by competition. The competitive advantage ratio describes the competitive ratio of species $\beta$ over $\alpha$.
\end{definition}

\subsection{Last Species Standing}\label{sec:lastSpecies}

A similar analysis used for the predator-prey model
can be applied to yield the following result.

\begin{theorem}[Last Species Standing] \label{thm:lastspecies}
    Suppose $\vec \alpha_0 \ = \ (\alpha_0, \dots, \alpha_0)$ and
    $\vec \beta_0 \ = \  (\beta_0, \dots, \beta_0)$.
    In a Leslie competitive model, one of the two
    species is likely to vanish as $n \rightarrow \infty$. The fate of the two species is determined by the sign of the term
    \begin{equation}
        D \ :=\ \alpha_0 - \sqrt{m} \beta_0.
    \end{equation}
    In particular, if $D \ > \  0$, then the population $\alpha$ vanishes and population $\beta$ grows exponentially. If $D \ < \  0$, then the population $\beta$ vanishes and the population $\alpha$ grows exponentially. If $D \ = \  0$, either both species vanish or both grow exponentially.
\end{theorem}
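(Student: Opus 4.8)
The plan is to linearize the system, decouple it by a clever change of variables, and then read off the fate of each species from the faster-growing decoupled mode.

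\textbf{Step 1 (linear regime and absorbing extinction).} As long as both populations are strictly positive, the $\max(\cdot,\vec 0)$ caps are inactive, so on that regime the dynamics reduce to the genuinely linear system $\vec\alpha_{n+1}=L\vec\alpha_n-km\vec\beta_n$, $\vec\beta_{n+1}=L\vec\beta_n-k\vec\alpha_n$. First I would record that extinction is absorbing: if $\vec\beta_n=\vec 0$, then $\vec\beta_{n+1}=\max(-k\vec\alpha_n,\vec 0)=\vec 0$ and $\vec\alpha_{n+1}=L\vec\alpha_n$, so once a species hits zero the survivor evolves by the pure Leslie matrix $L$ and grows at rate $\lambda_{\max}$ by the Corollary in Section \ref{sec:single}. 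Thus it suffices to analyze the linear system up to the first time a component is driven negative.

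\textbf{Step 2 (decoupling).} Because $L$ commutes with $I$, I would look for scalar combinations $\vec\alpha_n\pm c\vec\beta_n$ that diagonalize the coupling; a direct computation forces $c^2=m$, giving the two decoupled variables $\vec u_n:=\vec\alpha_n+\sqrt{m}\,\vec\beta_n$ and $\vec w_n:=\vec\alpha_n-\sqrt{m}\,\vec\beta_n$, which satisfy
\begin{equation}
\vec u_{n+1}=(L-\sqrt{m}\,kI)\vec u_n,\qquad \vec w_{n+1}=(L+\sqrt{m}\,kI)\vec w_n.
\end{equation}
With the constant initial data this gives $\vec u_0=(\alpha_0+\sqrt{m}\,\beta_0)\vec 1$ and $\vec w_0=(\alpha_0-\sqrt{m}\,\beta_0)\vec 1=D\,\vec 1$, where $\vec 1=(1,\dots,1)^T$, together with the reconstruction $\vec\alpha_n=\tfrac12(\vec u_n+\vec w_n)$ and $\vec\beta_n=\tfrac1{2\sqrt{m}}(\vec u_n-\vec w_n)$.

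\textbf{Step 3 (Perron--Frobenius asymptotics).} The eigenvalues of $L\pm\sqrt{m}\,kI$ are those of $L$ shifted by $\pm\sqrt{m}\,k$, so the dominant rates are $\lambda_{\max}+\sqrt{m}\,k$ for $\vec w_n$ and $\lambda_{\max}-\sqrt{m}\,k$ for $\vec u_n$; hence $\vec w_n$ strictly outgrows $\vec u_n$. Using that the dominant eigenvector $\vec\phi$ of a simple Leslie matrix is strictly positive (Perron--Frobenius, as invoked for the dominant eigenvalue in Section \ref{sec:single}), I would expand $\vec w_0=D\vec 1$ in the eigenbasis: the coefficient along $\vec\phi$ is a positive multiple of $D$, so $\vec w_n\sim D\,(\lambda_{\max}+\sqrt{m}\,k)^n\vec\phi$. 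Feeding this into the reconstruction, the leading behavior of $\vec\alpha_n$ and $\vec\beta_n$ is governed by $\pm D(\lambda_{\max}+\sqrt{m}\,k)^n\vec\phi$; since $\vec\phi>0$, exactly one of these leading terms points into the negative orthant according to the sign of $D$, so that species is driven to a zero component and clamped to extinction while the other survives and grows at rate $\lambda_{\max}+\sqrt{m}\,k$. For $D=0$ the fast mode vanishes identically ($\vec w_n\equiv\vec 0$), so $\vec\alpha_n$ and $\vec\beta_n$ are equal positive multiples of $\vec u_n$ and share its fate: both grow when $\lambda_{\max}-\sqrt{m}\,k>1$ and both decay to extinction otherwise, matching the stated dichotomy.

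\textbf{Main obstacle.} The delicate point is the $\max(\cdot,\vec 0)$ nonlinearity: the clean decoupling is valid only on the linear regime before any clamping, so I must argue that the asymptotic sign computation actually governs the \emph{first} clamping event. The comfortable part is that extinction is absorbing (Step 1) and that, because $\vec\phi$ is strictly positive, the losing species' whole vector eventually heads into the negative orthant simultaneously, so a transient componentwise clamp cannot rescue it. Making ``eventually'' precise---ruling out that the slower mode $\vec u_n$ keeps the losing population nonnegative for all finite $n$---is the one estimate needing genuine care, and it is exactly where the strict spectral gap $2\sqrt{m}\,k>0$ between the two decoupled growth rates does the work.
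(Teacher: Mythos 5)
Your proposal is correct and is, in substance, the same computation as the paper's, but packaged more cleanly and with two genuine improvements. The paper passes to the second-order recurrence of Proposition \ref{thm:to2ndOrd} (with $m\mapsto -m$), solves the matrix quadratic to get the roots $\Lambda_{1,2}=L\pm k\sqrt{m}I$, and then solves a $2\times2$ linear system involving $\vec\alpha_0$ and $\vec\alpha_1$ for the coefficient $\vec v_1$ on the dominant mode. Your change of variables $\vec u_n=\vec\alpha_n+\sqrt{m}\vec\beta_n$, $\vec w_n=\vec\alpha_n-\sqrt{m}\vec\beta_n$ diagonalizes the coupled first-order system directly into the same two modes $L\mp k\sqrt{m}I$, which buys you the coefficient on the fast mode immediately from the initial data ($\vec w_0=D\vec 1$) without ever computing $\vec\alpha_1$. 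You also supply two ingredients the paper's proof needs but omits: (i) since $\vec 1$ is not an eigenvector of $L$, concluding that the sign of the scalar multiplying $\vec 1$ controls survival really does require expanding $\vec 1$ in the eigenbasis and invoking strict positivity of the Perron eigenvector, as you do; and (ii) some account of the $\max(\cdot,\vec 0)$ clamping, which you correctly isolate as the one step needing a quantitative estimate (the spectral gap $2k\sqrt{m}$ between the two modes is indeed what closes it).

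One substantive discrepancy you should not paper over: your (correct) bookkeeping gives $\vec\alpha_n\sim\tfrac12 D(\lambda_{\max}+k\sqrt{m})^n\vec\phi$ and $\vec\beta_n\sim-\tfrac{1}{2\sqrt{m}}D(\lambda_{\max}+k\sqrt{m})^n\vec\phi$, so $D>0$ forces $\beta$ extinct and $\alpha$ surviving --- the \emph{opposite} assignment to the one in the theorem statement. A direct scalar check ($N=1$, $L=2$, $k=0.1$, $m=4$, $\alpha_0=10$, $\beta_0=1$, so $D=8>0$) confirms your version: $\beta_3<0$ while $\alpha_n$ grows. The paper's proof reaches the stated (reversed) conclusion only because it divides by $\Lambda_2-\Lambda_1$ written as $2mk$, whereas $\Lambda_2-\Lambda_1=-2k\sqrt{m}$; correcting that sign gives $\vec v_1=\tfrac{D}{2}\vec 1$, in agreement with you. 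So do not write ``matching the stated dichotomy'' --- state explicitly that the roles of $\alpha$ and $\beta$ in the theorem (equivalently, the sign of $D$) must be swapped.
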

\begin{proof}
    Proposition \ref{thm:to2ndOrd} can be generalized by the substitution $m \mapsto -m$. From the recursive relation
    \begin{equation}
        \alpha_n \ = \ (2L)\alpha_{n - 1} -L^2 \alpha_{n - 2} + mk^2 \alpha_{n - 2}
    \end{equation}
    we obtain the characteristic equation
    \begin{equation}
    \Lambda ^2 - 2L \Lambda + L^2 - mk^2 I \ =\ O,
    \end{equation}
    where $I, O$ are the identity matrix and the zero matrix of dimension $N$-by-$N$ respectively.
    By the quadratic formula, we derive the following roots:
    \begin{eqnarray}
        \Lambda_1  & \ = \ & L + k\sqrt{m} I
        \nonumber \\
        \Lambda_2 & \ = \ & L - k\sqrt{m} I,
    \end{eqnarray}
    Notice that $k$ and $m$ are both non-negative real values, and that $L$ is assumed to guarantee positive population growth. Hence,
    $\Lambda_1$ has a positive eigenvalue.

    From (\ref{eqn:MaxEvalGen}), we characterize the population as
    \begin{equation}\label{eqn:closedEq}
    \vec \alpha_n \ =\ \Lambda_1^{n}  \vec v_1+ \Lambda_2^{n}\vec v_2.
    \end{equation}
    In the limit as $n \rightarrow \infty$,
    \begin{equation}
    \vec \alpha_n \ \approx\ \Lambda_1^{n}  \vec v_1.
    \end{equation}
    Thus, the population is non-vanishing if and only if $\vec v_1$ is positive. We compute $\vec v_1$ directly. From \ref{eqn:closedEq}, we obtain two conditions:
    \begin{eqnarray}
        \vec \alpha_0 & \ = \ & \vec v_1 + \vec v_2 \nonumber \\
        \vec \alpha_1 & \ = \  & \Lambda_1 \vec v_1 + \Lambda_2 \vec v_2.
    \end{eqnarray}
    Solving for $\vec v_1$ yields
    \begin{equation}
        \begin{split}
        \vec v_1 \ &\ = \ \frac {\Lambda_2 \vec\alpha_0 - \vec\alpha_1} {\Lambda_2 - \Lambda_1}
        \ = \
        \frac {L\vec \alpha_0 - k\sqrt m \vec \alpha_0 - L \vec \alpha_0 + km\beta_0} {2mk}
        \ = \
        \frac {\sqrt{m}\vec\beta_0 -  \vec \alpha_0} {2\sqrt m} \\
        & \ = \ \frac {\sqrt m \beta_0 - \alpha_0} {2 \sqrt m} (1, \dots, 1)
        \ = \ -\frac D{2\sqrt m} (1, \dots, 1).
        \end{split}
    \end{equation}
    Similarly, we obtain
    \begin{equation}
        \beta_n \ \approx \ \Lambda_1 \vec w_1,
    \end{equation}
    where
    \begin{equation}
       \vec w_1 \ = \ \frac D {2\sqrt m }(1, \dots, 1).
    \end{equation}
    If $D \neq 0$, substituting the appropriate value of $D$ yields the desired result. Now suppose $D \ = \  0$ or $\alpha_0 \ = \ \sqrt m \beta _0$, then the conditions of \ref{eqn:conditions} imply
    \begin{eqnarray}
        \vec \alpha_1  \ = \ L \vec \alpha_0 - k \sqrt m \vec \alpha_0 \nonumber \\
        \vec \beta_1 \ = \ L \vec \beta_0 - k \sqrt m \vec \beta_0.
    \end{eqnarray}
    By induction, it is possible to prove that
    \begin{eqnarray}
        \vec \alpha_n \ = \ \sqrt m \vec \beta_n
    \end{eqnarray}
    for all nonnegative integers $n$. In turn, we obtain
    \begin{eqnarray}
        \vec \alpha_{n + 1} \ &=& \
        \left(
            L - k \sqrt m
        \right)^n \vec \alpha_0 \nonumber
        \\
        \vec \beta_{n + 1} \ &=& \
        \left(
            L - k \sqrt m
        \right)^n \vec \beta_0,
    \end{eqnarray}
    and the two populations grow or vanish simultaneously.
\end{proof}

\section{Applications of the Competitive Model: Limitation of Resources and Machine Learning}\label{sec:machinelearning}

In classical ecology, the competitive exclusion principle predicts that two
populations that compete over the same resources are unlikely to coexist. The
last species standing theorem supports this principle. We apply the model to
existing data on the population dynamics of species, and
quantify the competitive advantage of one species over the other.

Though unlikely, competitive coexistence is possible in the case of $D \ = \  0$.
We study the dynamics at the equilibrium point of competitive coexistence, and demonstrate its coherence with P. H. Leslie's original observations.

\subsection{Adjusting for Limited Resources}
Resources are limited in a realistic biological system, and it is
impossible that both species display indefinite exponential growth. We assume that the both populations of the model rely on the same resource with a logistical growth rate.

Let $r_n(t):\mathbb R \rightarrow \mathbb [0, 1]$ be a continuous function such that $r_n(t)$
describes the population density of the resource at the time interval $[(n - 1)\tau, n\tau)$. $r_n(t) \ = \  1$ implies maximum resource and $r_n(t) \ = \  0$ implies absence of resource. Though $r_n(t)$ models the continuous population evolution in between the discrete timestep of the competitive model, our major concern is the value of $r_n(0)$ and $r_n(\tau)$ where $\tau$ is the unit time for one instance of \eqref{eqn:conditions}.

Let $T$ be the total population capacity of the system. If the combined population of the competitive species equals the total capacity, the resources are totally consumed. The consumption happens when the time is an integer multiple of $\tau$.
We deduce the initial condition and the logistic growth that allows the resource population to asymptote to 1. If the total population exceeds the capacity, the
resources go extinct, and $r(t) \ = \  0$. By the observation, the equation for $r_{n+1}(t)$ is given as
\begin{align}
    r_{n + 1}(0) & \ = \ \max\left(1 - \frac {\alpha_n + \beta_n} T, 0\right) \\
    r_{n + 1}(t) & \ = \ \frac{\max(T - \alpha_n - \beta_n, 0) e^{mt}} {
    \max(T - \alpha_n - \beta_n, 0) e^{mt} + \alpha_n + \beta_n
    }.
\end{align}
The quantity $m$ determines the replenishing rate of the resources. For high values of $\tau$, we can estimate the $r_n(t)$ as the following Iverson bracket, i.e.,
\begin{align}
    r_{n + 1}(\tau) \ = \ [T \ > \  \alpha_n + \beta_n].
\end{align}
To justify the claim, we plot the value of $r(\tau)$ for varying
values of $\alpha_n+\beta_n$ in figure \ref{fig:rtauplots}.

\begin{figure}[h]
    \centering
    \begin{subfigure}[b]{0.45\textwidth}
        \includegraphics[width=\textwidth]{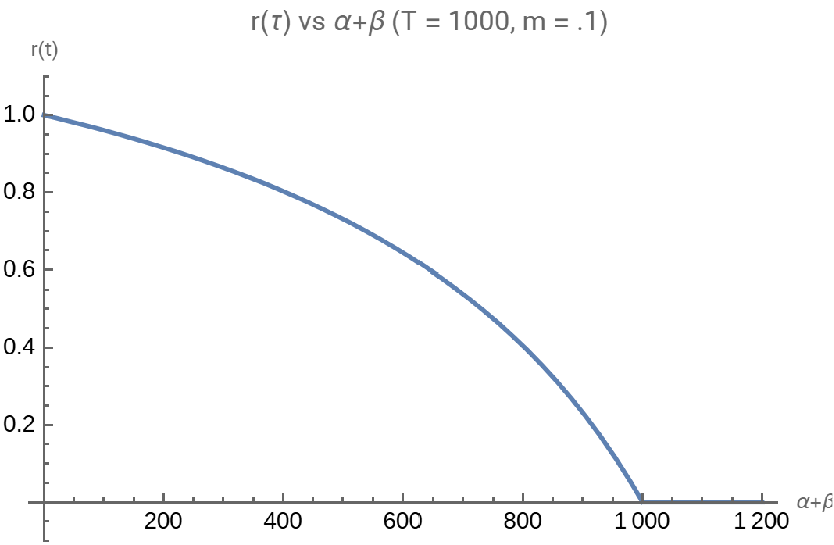}
        \label{fig:fig1}
    \end{subfigure}
    \hfill
    \begin{subfigure}[b]{0.45\textwidth}
        \includegraphics[width=\textwidth]{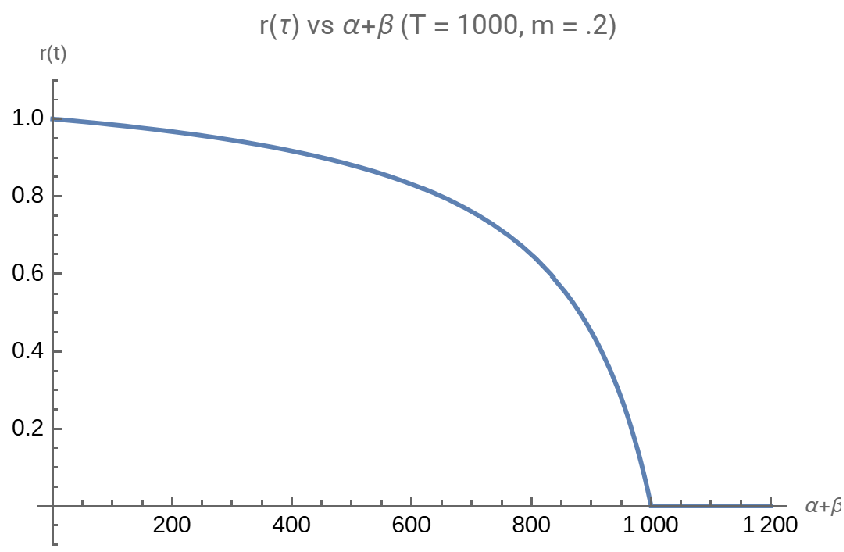}
        \label{fig:fig2}
    \end{subfigure}
    \
    \begin{subfigure}[b]{0.45\textwidth}
        \includegraphics[width=\textwidth]{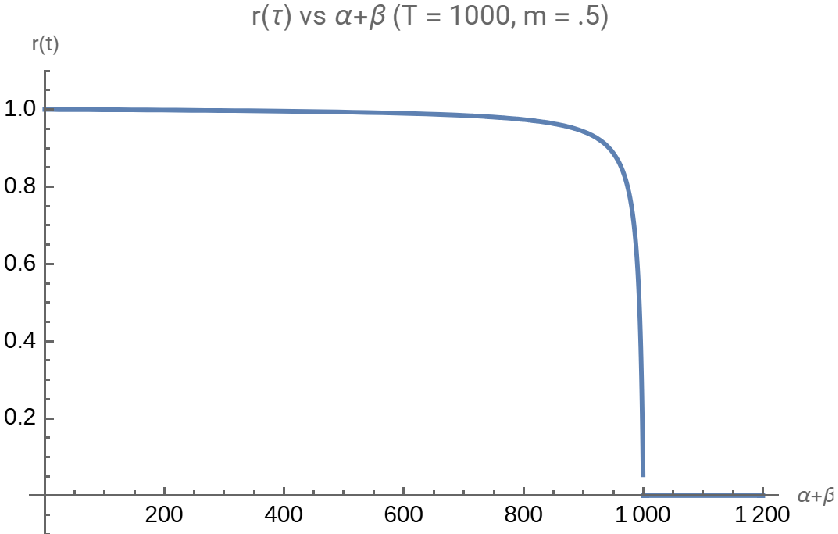}
        \label{fig:fig3}
    \end{subfigure}
    \hfill
    \begin{subfigure}[b]{0.45\textwidth}
        \includegraphics[width=\textwidth]{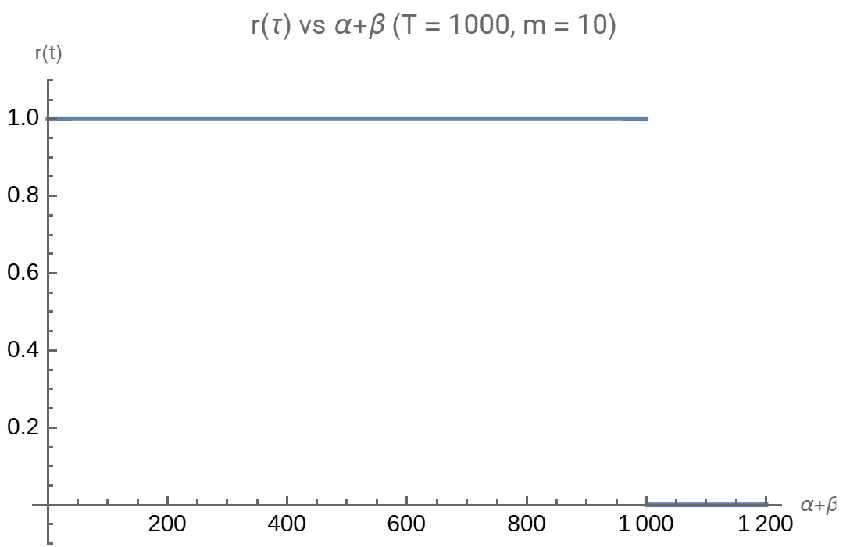}
        \label{fig:fig4}
    \end{subfigure}
    \caption{Plot of $r(\tau)$ versus $\alpha_n + \beta_n$. }
    \label{fig:rtauplots}

\end{figure}

Since the resources are exhausted when the population goes over
the population capacity $T$, we deduce that growth and competition stops when $\alpha_n + \beta_n \ > \  T$. The natural death of the old age groups allow the resources to replenish \cite{Bag19}. As a consequence, a small perturbation is introduced around the equilibrium. To simplify our calculations, we ignore the perturbation at equilibrium and claim that the population stagnates once the population reaches its full capacity.

\subsection{Competitive Coexsistence and Machine Learning} \label{sec:ML}

A classical example of competitive coexistence is from the experiment of Gause in the 1930's. The population of Paramecium Aurelia and Paramecium Caudatum\footnote{P. Aurelia and P. Caudatum in short.} was observed under a controlled environment. Both species shared a resource of Bacillus pyocyaneus. The combined population of the two species display exponential growth until the equilibrium population is reached. At equilibrium, the two species coexist. P. Aurelia possesses competitive advantage, and therefore has a faster growth rate along with a higher equilibrium population.

P. H. Leslie studied the population dynamics using the Lotka-Euler equations. We carry out a similar analysis using the Leslie model.
In order to account for limited resources, we focus on the data before equilibrium is attained. In the experimental setup of Gause, the liquid medium containing nutrients were replaced everyday after the observation. Therefore, if the resources consumed per day by the combined population does not reach the threshold, the effect of the limitation can be ignored.

For a quantitative analysis, we consider the specifics of the Competitive model. Fix the number of population groups to be $N \ = \  3$. The population dynamics depend on five quantities
\begin{align}(f, k, \alpha_0, \beta_0, m)\end{align}
where each parameter describes fertility, predation, initial population of the dominant population, initial population of the subdominant population, and the multiplier. To ensure that $\vec \alpha$ is the
dominant population, we set $m \ > \  1$.The recursive equation for the competitive model is given from \eqref{eqn:conditions}.

Since we assume competitive coexistence, $D \ = \  0$ by theorem \ref{thm:lastspecies}. Therefore, we fix the initial population of the
dominant species as
\begin{align}
    \alpha_0 \ = \ \sqrt{m}\beta_0.
\end{align}

Standard regression techniques fail to provide a fit of the theoretical model. The challenge is that the resulting curve of the population does not have an exact analytic expression, but rather is defined by a recursive relation determined by the parameter. In order to overcome the increasing complexity for larger values of $T$, we draw to techniques from machine learning.

Let $p, q:[T]\rightarrow \mathbb R$ be two functions that each describe
the observed population of the dominant and subdominant species\footnote{$[T] \ := \ \{1, 2, \dots, T\}$ describes the discrete timestep. For Gause's experiment, $t\in[T]$ describes the days elapsed from the beginning of the experiment. }. Also, define
\begin{align}
    \overline\alpha_t \ = \ \sum_{i \in [N]} \vec \alpha_t \cdot e_i
    \textAnd
    \overline \beta_t \ = \ \sum_{i \in [N]} \vec \beta_t \cdot e_i.
\end{align}
In this definition, $\vec \alpha_t, \vec \beta_t$ is the population vector at time $t$ of a competitive model with parameters $(f, k, \beta_0, m)$. The vectors $e_i \ := \ (0, \dots, 0, 1, 0, \dots, 0)$ are the $i$th
cannonical bases.

In order to fit the model, we define a function $\chi: \mathbb R^4 \rightarrow \mathbb R$
that describes the fit.

\begin{align}
    \chi(f, k, \beta_0, m) \ := \
\left(\sum_{t \ \leq \  T} \left(\frac{\overline \alpha_t - p(t)} {p(t)}\right)^2 +
    \sum_{t \ \leq \  T} \left(\frac{\overline \beta_t - q(t)} {q(t)}\right)^2
    \right)
\end{align}
Also fix the value of the number of populations to be $N \ = \  3$.

By the nature of the model, the ratio between the equilibrium population depends on $m$, and the growth of both populations are determined by $f$. The system displays extreme sensitivity to the two values $k, \beta_0$. The two values $k, \beta_0$ are fixed after they enter an admissible range.

We performed the following machine learning scheme to train the model to extract the parameters that minimizes $\chi$.

\vspace{2mm}

\underline{\textsc{Machine Learning Scheme}}
\begin{enumerate}
    \item Evaluate $\chi$ for random choice of the four parameters.
    Among the random tuples, choose the tuple that minimizes the value of $\chi$.
    \item Perform a general gradient descent involving all four parameters, starting from the point chosen in step 1. Find the admissible value of $k, m$.
    \item Define a subroutine, $\rm{Opt_g}$, that performs a gradient descent solely on parameter $g$
    with all other parameters fixed.
    \item Finally perform an optimized gradient descent on parameter $m$. In the beginning of each loop, invoke the subroutine $\rm{Opt_g}$ to adjust the parameter $g$.
\end{enumerate}

\vspace{2mm}

Gause's original data includes population evolution over 16 days where equilibrium population is attained in day 10. We set $T \ = \  10$ and tested for 20 parameters. The threshold value of admissible $\chi$ for a 5\% significance and 20 degrees of freedom is
is 10.851 \cite{NIS22}.

After training the model according to the scheme described above, we obtained a fit with $\chi \ = \  3.11$.

\begin{figure}[htp]
    \centering
    \includegraphics[width=0.8\textwidth]{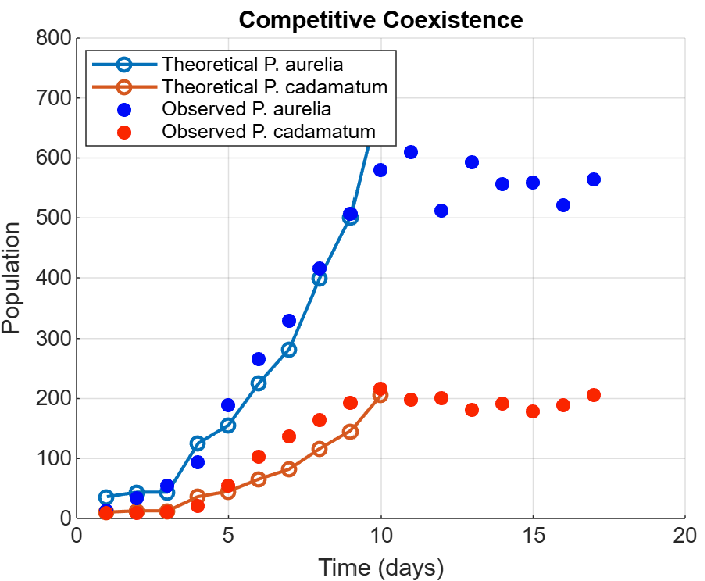} 
    \caption{Fit of the competitive model to experemental data. Fit parameters are $(f, \beta_0, k, m) \ = \ (1.79, 0.56, 3.53, 9.42
)$, and the associated error is $\chi(f, \beta_0, k, m) \ = \ 3.11$. }
    \label{fig:example}
\end{figure}

\section{Generalised Framework: Quantum Operators}

\subsection{Motivation}
From Equation \ref{eqn:MaxEvalGen}, we have already seen that a complex model approach allows us to extract a lot more information compared to a real model approach, where we have to work with more rigid closed-form solutions. Using this as a motivation, we now propose a quantum mechanical approach for modeling the time evolution of populations with discrete age demographics, by using bosonic ladder operators.

The application of ladder operators, or more colloquially creation and annihilation operators, from quantum mechanics to model complex real-world interactions between $N$ systems is well-studied \cite{Bag19}
. Previously, a fermionic ladder operator approach was favored as it allowed for discrete evolutionary states, but in recent times, a truncated bosonic approach has been developed with great success, to study population evolution \cite{ArG89}. We use previous work on this subject as a motivation to develop a bosonic ladder operator approach to study populations wtih discrete age-structures and to relate it to the Leslie matrix approach from our previously developed model.

\subsection{Hermitian of the Leslie Matrix}
In a quantum system, the time evolution of a state $\ket{\varphi}$ can be described by the Schrodinger Equation:

\begin{equation}
    \frac {\partial} {\partial t}i \hbar \ket{\varphi(t)} \ = \
    i H \ket{\varphi(t)}.
\end{equation}

The solution to this equation can be described by the exponential map:

\begin{equation}
    \ket{\varphi(t)} \ = \ e^{iHt}\ket{\varphi(0)}.
\end{equation}

A population model can also be described using quantum operators. In this case, the population vector becomes the state. We omit the normalization constant $\hbar$ for better notation.

\begin{prop}[Hamiltonian for a Single-Population Leslie Matrix]
    Consider a single non-interacting population that follows the time-evolution dictated by the Leslie matrix $L$. If 
    \begin{align}
    \|L - I\| \ \leq \  1,
    \end{align}
    where $I$ denotes the identity operator, then our Hamiltonian has a closed form solution:
    \begin{align}H \ = \ \frac{(-1)^k}{ik} (L - I)^k.\end{align}
\end{prop}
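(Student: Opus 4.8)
The plan is to recognize the right-hand side (read as a sum over $k\geq 1$, as the free index indicates) as the Mercator series for the matrix logarithm, and to argue that $H$ is \emph{forced} by the requirement that the quantum flow reproduce a single Leslie time step. Concretely, the state is meant to evolve by $\ket{\varphi(t)} = e^{-iHt}\ket{\varphi(0)}$, while the discrete dynamics advance the population by $\vec p_{n+1} = L\vec p_n$. Demanding that the unit-time propagator agree with the Leslie matrix, i.e. $e^{-iH} = L$, pins down $H$ (on the principal branch) as $H = i\log L$, where $\log$ is the principal matrix logarithm. The proposition is then exactly the assertion that $\log L = \log\big(I + (L-I)\big) = \sum_{k\geq 1}\frac{(-1)^{k-1}}{k}(L-I)^k$; multiplying by $i$ and simplifying the constant via $1/i = -i$ (together with $(-1)^{k-1}=(-1)^{k+1}$) recovers $H = \sum_{k\geq 1}\frac{(-1)^k}{ik}(L-I)^k$. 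One remark I would make at the outset: this sign is tied to the Schrödinger convention, and the propagator displayed earlier as $e^{+iHt}$ yields the negative of the stated formula, so I would fix the convention $e^{-iHt}$ (which is the one matching the claim) and note the displayed sign is a typo.

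The steps, in order, are: (i) set $X := L - I$ and record that the hypothesis reads $\|X\|\leq 1$; (ii) show that when $\|X\|<1$ the spectrum of $L$ lies in the disk $\{\,|z-1| < 1\,\}$, which keeps it off the branch cut of the logarithm and guarantees that the principal $\log L$ exists and satisfies $\exp(\log L) = L$ (so that this $H$ genuinely generates the correct flow); (iii) prove that $\sum_{k\geq 1}\frac{(-1)^{k-1}}{k}X^k$ converges in operator norm and equals $\log L$, by domination with the scalar series $\sum \frac{1}{k}\|X\|^k$ and the holomorphic functional calculus; and (iv) assemble $H = i\log L$ and simplify the constants to the displayed form, observing as a sanity check that the $k=1$ term $i(L-I)$ is the leading contribution, so that $H \approx i(L-I)$ when $L\approx I$.

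The main obstacle I anticipate is convergence at the boundary $\|X\| = 1$, which the hypothesis $\|L-I\|\leq 1$ nominally includes. The scalar series $\sum\frac{(-1)^{k-1}}{k}x^k$ converges at $x=+1$ but diverges at $x=-1$, so for matrices with $\|X\|=1$ the absolute, term-by-term domination of $\sum\frac1k X^k$ genuinely fails and cannot be patched directly. I would therefore establish the clean result under the strict inequality $\|L-I\|<1$, where geometric-type domination is immediate, and then either (a) restrict the statement to that case, or (b) recover the boundary by a spectral argument: if every eigenvalue $\mu$ of $L$ satisfies $|\mu-1|<1$ and $L$ is diagonalizable, the series converges eigenvalue-by-eigenvalue and the identity follows from the functional calculus even where norm-convergence is unavailable. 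I would flag this explicitly, since the Leslie matrices of interest are non-normal, so the gap between $\|L-I\|$ and the spectral radius of $L-I$ is precisely what makes the boundary case delicate.
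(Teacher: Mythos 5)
The paper states this proposition without any proof at all --- the \texttt{proof} environment that follows it belongs to the next proposition, on computing $\log L$ by diagonalization --- so there is no argument of record to measure yours against; you are supplying the missing proof. Your route, reading the right-hand side as the Mercator series $\sum_{k\geq 1}\frac{(-1)^{k-1}}{k}(L-I)^k$ for the principal matrix logarithm and deriving $H$ from the requirement that the unit-time propagator equal $L$, is the natural one and is correct where it applies. You also correctly diagnose the three defects in the statement as printed: the free index $k$ that should be a summation, the sign mismatch with the propagator $e^{+iHt}$ displayed earlier (which would force $H=-i\log L$ rather than $+i\log L$), and the fact that the hypothesis $\|L-I\|\leq 1$ does not guarantee convergence at the boundary (the scalar series already diverges at $x=-1$, and $L-I=-I$ gives an $L$ with no logarithm at all). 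Your fallback (b) for the boundary case --- eigenvalue-by-eigenvalue convergence via the functional calculus for diagonalizable $L$ --- is essentially what the paper's subsequent proposition does when it writes $\log L = P\log(D)P^{-1}$ and verifies $e^{\log L}=L$; the two are complementary, with your norm argument establishing the series identity on the open ball and the spectral computation giving existence of $\log L$ more generally, at the price of diagonalizability and eigenvalues off the branch cut. One further caveat worth adding: for the simple Leslie matrices the paper actually cares about (first row of $f$'s with $f\geq 1/N$, subdiagonal of ones), $\|L-I\|$ in any standard operator norm typically exceeds $1$, so the hypothesis is rarely satisfied in the paper's own applications and the diagonalization route is then the only one available.
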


\begin{prop}[Algorithm to Compute the Hamiltonian]
     The matrix logarithm of $L$ is
    \begin{eqnarray}
        \log(L) \ = \ P \log(D) P^{-1},
    \end{eqnarray}
    where
    \begin{eqnarray}
        D \ = \ \textnormal{diag}( d_1, \dots, d_N)
    \end{eqnarray}
    and
    \begin{equation}
        \log(D) \ = \ \textnormal{diag}( \log(d_1), \dots, \log(d_N)).
    \end{equation}
\end{prop}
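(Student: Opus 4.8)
The plan is to verify that the proposed expression $P\log(D)P^{-1}$ is a genuine matrix logarithm of $L$ — that is, that it exponentiates back to $L$ — thereby furnishing a closed form for the Hamiltonian $H$ that remains valid (and computable) even when the Mercator series of the preceding proposition fails to converge. Throughout I would take $H = -i\log(L)$ as forced by the relation $\ket{\varphi(1)} = e^{iH}\ket{\varphi(0)} = L\ket{\varphi(0)}$, so that computing $\log(L)$ is exactly the content of the problem.

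First I would record the hypotheses under which the formula is meaningful: $L$ must be diagonalizable, written $L = PDP^{-1}$ with $D = \textnormal{diag}(d_1,\dots,d_N)$ its eigenvalues and $P$ the matrix of eigenvectors, and $0$ must not be an eigenvalue so that each $\log(d_i)$ is defined. Both hold for the simple Leslie matrices of interest: substituting $x = 0$ into the Lotka--Euler equation of Theorem \ref{LEeq} gives the value $-f \neq 0$, so $L_f$ is invertible, while the spectral results of Appendix \ref{app:A} isolate a dominant eigenvalue with the remaining roots distinct, giving a full eigenbasis. When some $d_i$ is complex one fixes a branch of the complex logarithm (the principal branch).

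The core of the argument is the elementary fact that conjugation commutes with any convergent power series. I would first prove by induction the identity $(PXP^{-1})^k = PX^kP^{-1}$ for all $k \geq 0$, the interior factors $P^{-1}P$ telescoping to the identity, and then apply it termwise to the exponential series to obtain $\exp(PXP^{-1}) = P\exp(X)P^{-1}$ for any matrix $X$. Taking $X = \log(D)$ and using that the exponential of a diagonal matrix acts entrywise, $\exp(\log(D)) = \textnormal{diag}(e^{\log d_1},\dots,e^{\log d_N}) = D$, yields $\exp\!\big(P\log(D)P^{-1}\big) = PDP^{-1} = L$. Hence $P\log(D)P^{-1}$ solves $e^M = L$ and is by definition a logarithm of $L$, as claimed.

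Finally I would reconcile this with the preceding proposition: when $\|L-I\| < 1$ the series defining $H$ converges, and both it and the diagonalization formula return the principal logarithm, so the two expressions agree (up to the sign and normalization conventions of that proposition) on the overlap of their domains; the diagonalization formula is simply the robust analytic continuation. The one genuine subtlety — and the main obstacle worth flagging — is the non-uniqueness of the matrix logarithm: each consistent assignment of branches to the $\log(d_i)$ produces a valid solution of $e^M = L$, so to pin down the Hamiltonian unambiguously one must commit to the principal branch, and it is precisely the distinctness of the eigenvalues that rules out the additional ``non-primary'' logarithms that arise when eigenvalues coincide. I would also note that if $L$ were not diagonalizable one would replace $D$ by its Jordan form and evaluate $\log$ on each Jordan block, but this degenerate case does not occur for the Leslie matrices under study.
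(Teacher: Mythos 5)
Your argument is correct and follows essentially the same route as the paper: both verify that $P\log(D)P^{-1}$ is a logarithm of $L$ by expanding the exponential series termwise, using $(P X P^{-1})^k = P X^k P^{-1}$ to pull the conjugation outside the sum and reduce to $\exp(\log(D)) = D$. Your additional care about diagonalizability, invertibility of $L$, and branch selection for the complex logarithm goes beyond what the paper records but does not change the method.
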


\begin{proof}
    We directly take the exponential and show that it matches $L$.
    \begin{eqnarray}
        e^{\log(L)} \ = \
        \sum_{k \ = \  0}^\infty
        \frac {(\log(L))^k} {k!}\ = \
        \sum_{k \ = \  0}^\infty
        \frac {(P \log(D) P^{-1})^k} {k!}\ = \
        \sum_{k \ = \  0}^\infty
        \frac {P \log(D)^k P^{-1}} {k!} \nonumber
        \\
        = \ P\left( \sum_{k \ = \  0}^\infty
        \frac {\log(D)^k}{k!}\right) P^{-1} \ = \ PDP^{-1} \ = \ L
    \end{eqnarray}

We also provide an example use of the algorithm. We consider the population model described in the introduction (Equation \ref{eqn:motivatingModel}) and set the parameters to be
$(f, m, k, F) \ = \ (.2, .5, .5, 2)$. Numerically, the model simplifies to

\begin{eqnarray}\label{eqn:motivatingModelEx}
    \vec p_{n + 1} \ :=\ \widetilde L\vec p_n \ = \
    \begin{bmatrix}
        .2 & .2 & .2 & .5\\
        1 & 0 & 0 & 0\\
        0 & 1 & 0 & 0 \\
        -.5 & -.5 & -.5 & 3
    \end{bmatrix}\vec p_n.
\end{eqnarray}

We obtain
\begin{equation}
    \log\left(\widetilde L\right) \ \approx \
    \begin{bmatrix}
-0.3926 & -0.3606 & 0.5079 & 0.2940 \\
1.9656 & -0.9006 & -0.8685 & -0.2296 \\
-3.2005 & 2.8342 & -0.0320 & 0.4569 \\
-0.5212 & -0.0644 & -0.2940 & 1.1627
\end{bmatrix},
\end{equation}

and by direct computation via MATLAB, we verify that

\begin{equation}
    e^{\log(L)} \ \approx \  \begin{bmatrix}
        .2 & .2 & .2 & .5\\
        1 & 0 & 0 & 0\\
        0 & 1 & 0 & 0 \\
        -.5 & -.5 & -.5 & 3
    \end{bmatrix} \ = \ L.
\end{equation}

\end{proof}

 \appendix

\section{Roots of $\ch_N(z)$} \label{app:A}

In this section, we provide our observations on the roots of $\ch_N(z)$ using Complex Analysis.

\begin{theorem}
    If $f \ \geq \ 1/N$, then $\ch(z)$ has a unique positive, real
    root that has a magnitude strictly greater than any of the other
    complex roots.
\end{theorem}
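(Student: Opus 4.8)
The plan is to reduce the characteristic equation to the classical Euler--Lotka form and then exploit the strict monotonicity of an associated real function together with the equality case of the triangle inequality. Since $\ch_N(0) = -f \neq 0$, no root equals zero, so for any root $z$ I may divide $\ch_N(z)=0$ by $z^N$ and rewrite it as
\[
  \phi(z) \ :=\ f\sum_{k=1}^N z^{-k} \ = \ 1.
\]
First I would restrict to $x \in (0,\infty)$ and observe that $\phi(x)$ is continuous and strictly decreasing, with $\phi(x)\to+\infty$ as $x\to 0^+$ and $\phi(x)\to 0$ as $x\to\infty$. By the intermediate value theorem there is then exactly one $\lambda_{\max}>0$ with $\phi(\lambda_{\max})=1$, which is the unique positive real root. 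Evaluating at $x=1$ gives $\phi(1)=Nf$, so the hypothesis $f\geq 1/N$ forces $\phi(1)\geq 1=\phi(\lambda_{\max})$ and hence, by monotonicity, $\lambda_{\max}\geq 1$; this records the ``growing population'' threshold, though the positivity $f>0$ is all that is needed for the domination argument itself.

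Next I would establish the weak domination bound. For an arbitrary complex root $z$, taking moduli in $\phi(z)=1$ and applying the triangle inequality gives
\[
  1 \ = \ \left| f\sum_{k=1}^N z^{-k}\right| \ \leq \ f\sum_{k=1}^N |z|^{-k} \ = \ \phi(|z|).
\]
Thus $\phi(|z|)\geq 1=\phi(\lambda_{\max})$, and the strict monotonicity of $\phi$ immediately yields $|z|\leq \lambda_{\max}$.

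The main obstacle, and the heart of the proof, is upgrading this to a \emph{strict} inequality whenever $z\neq\lambda_{\max}$. Here I would analyze the equality case: $|z|=\lambda_{\max}$ forces equality in the triangle inequality above, which requires all the nonzero summands $z^{-k}$ to share a common argument. Writing $z=re^{i\theta}$, the argument of $z^{-k}$ is $-k\theta$, and equating $\arg(z^{-1})$ and $\arg(z^{-2})$ modulo $2\pi$ gives $-\theta\equiv -2\theta \pmod{2\pi}$, hence $\theta\equiv 0\pmod{2\pi}$. So $z$ is a positive real and therefore equals $\lambda_{\max}$ by uniqueness. This step requires $N\geq 2$ (so that at least the two terms $z^{-1},z^{-2}$ are present); the case $N=1$ is immediate since $\ch_1(x)=x-f$ has the single root $f$. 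As a sanity check I would note that this argument is the analytic incarnation of the primitivity of $L_f$: the directed graph of $L_f$ is strongly connected and carries a self-loop at the first vertex (coming from $f_1=f>0$), which is exactly the Perron--Frobenius condition guaranteeing a simple, strictly dominant eigenvalue.
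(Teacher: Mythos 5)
Your proof is correct, and it takes a genuinely different route from the paper. The paper multiplies $\ch_N(z)$ by $(z-1)$ to get $h(z)=z^{N+1}-(f+1)z^N+f$, splits $h$ into two summands, and applies Rouch\'e's theorem on a contour of radius $1+\epsilon$ to count $N$ roots inside and one outside, then identifies the outside root as real positive via the intermediate value theorem. You instead divide $\ch_N(z)=0$ by $z^N$ to reach the Euler--Lotka form $f\sum_{k=1}^N z^{-k}=1$, obtain existence and uniqueness of $\lambda_{\max}$ from strict monotonicity of $\phi$ on $(0,\infty)$, and upgrade $|z|\le\lambda_{\max}$ to strict inequality by analyzing the equality case of the triangle inequality (correctly flagging $N\ge 2$ and disposing of $N=1$ separately). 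Your approach buys several things: it delivers the \emph{strict} domination $|z|<\lambda_{\max}$ for every other root directly, whereas the contour argument only separates roots across the circle of radius $1+\epsilon$ and must be supplemented to rule out other roots of modulus between $1$ and $\lambda_{\max}$; it handles the boundary case $f=1/N$ (where $\lambda_{\max}=1$ and the paper's strict Rouch\'e inequality on $C_{1+\epsilon}$ degenerates) with no extra work; and it avoids verifying the modulus comparison on the entire contour, a step the paper only checks via a Taylor expansion at the single point $z=1$. What the paper's approach buys in exchange is the explicit root count ($N-1$ roots inside the unit disk), which is reused in the companion theorem for $f<1/N$; your method, as you note, is the analytic shadow of Perron--Frobenius primitivity and generalizes more readily to nonconstant fertility rates $f_1,\dots,f_N>0$.
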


\begin{proof}
    Consider the polynomial
    \begin{align}
        h(z) \ := \ (z - 1) \ch_N(z) \ = \ z^{N + 1} - (f+1) z^N + f
    \end{align}
    which has a simpler algebraic expression.
    We split the
    polynomial $h(x)$ into two summands, and invoke Rouche's Theorem (\cite{SS03} p91).
    Let $C_{1 + \epsilon}$ be a circular contour centered at the origin
    with radius $1 + \epsilon$ for arbitrarily small $\epsilon$.
    Write
    \begin{align}
        h(z) \ = \ (z^{N + 1} + f) + (1 + f)z^N
    \end{align}
    and Taylor expand the two summands at $z = 1$.
    \begin{align}
        (z^{N + 1} + f) &\ = \ 1 + f + (N + 1) \epsilon \\
        z^N(1 + f) &\ = \ (1 + N \epsilon)  (1 + f) \ = \ 1 + f + N(1 + f)\epsilon
    \end{align}
    By assumption, $f \ \geq \ 1/N$, which implies $(N + 1) \ \leq \ N(1 + f)$.
    The modullus of the two terms along the contour can be compared
    as follows.
    \begin{align}
        \left|z^{N + 1} + f\right| \ \leq \ \left|(1 + f)z^N\right|
    \end{align}
    By Rouche's theorem, $h(z)$ has the same number of roots as the
    term that has a larger modullus in the countour $C_{1 + \epsilon}$, which is the
    summand $(1-f)z^N$. It is trivial to see that this summand has $N$
    roots inside the countour, and by fundamental theorem of algebra,
    $h(z)$ has $N + 1$ roots.

    We know that $\ch_N(z)$ is positive somewhere in the interval $[1, \infty)$.
    We consider the following:
    \begin{equation}
        \ch_N(1) \ =\  1 - fN \ \leq \ 0.
    \end{equation}
    By the Intermediate Value Theorem, we conclude that the one root outside
    the unit circle is a positive real value.
\end{proof}

\begin{theorem}
    If $f \ < \ 1/N$, then all the roots of $\ch_N(z)$ have
    a modullus strictly less than $1$.
\end{theorem}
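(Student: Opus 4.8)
The plan is to apply Rouché's theorem directly to $\ch_N(z) = z^N - f(z^{N-1} + \cdots + z + 1)$ on the unit circle $C_1 = \{|z| = 1\}$, taking the leading monomial $z^N$ as the dominant term. First I would split $\ch_N(z) = g(z) + r(z)$ with $g(z) = z^N$ and $r(z) = -f(z^{N-1} + \cdots + z + 1)$. Both are entire and $C_1$ is a simple closed contour, so the hypotheses of Rouché's theorem are in place once the required strict inequality on $C_1$ is established.

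The key estimate is to bound the remainder on the contour: for $|z| = 1$, the triangle inequality gives
\begin{equation}
    |r(z)| \ = \ f\,\left|z^{N-1} + \cdots + z + 1\right| \ \leq \ f \sum_{j \ = \  0}^{N - 1} |z|^j \ = \ fN,
\end{equation}
whereas $|g(z)| = |z^N| = 1$ at every point of $C_1$. Since the hypothesis $f < 1/N$ yields $fN < 1$ as a bound independent of $z$, I obtain $|r(z)| \leq fN < 1 = |g(z)|$ everywhere on $C_1$; the inequality is strict even at $z = 1$, where the triangle bound $fN$ is attained, because the dominant term still has modulus $1 > fN$.

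By Rouché's theorem (\cite{SS03}), $\ch_N$ and $g(z) = z^N$ then have the same number of zeros, counted with multiplicity, inside $C_1$. Since $z^N$ has a zero of order $N$ at the origin, $\ch_N$ has exactly $N$ zeros in the open unit disk. As $\ch_N$ is a polynomial of degree $N$, it has exactly $N$ zeros in total, so all of them lie strictly inside the unit disk, which is precisely the claim.

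I do not expect a genuine obstacle here: the only point requiring care is confirming that the Rouché inequality is strict at every point of the contour, and this is guaranteed because $fN < 1$ is a uniform ($z$-independent) strict bound, so the single boundary point $z = 1$ where the crude triangle estimate is sharp causes no trouble. This also dovetails with the companion $f \geq 1/N$ theorem, where the same splitting applied on the slightly larger contour $C_{1+\epsilon}$ instead isolates exactly one root outside the disk.
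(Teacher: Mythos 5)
Your proof is correct, but it takes a genuinely different route from the paper. The paper first clears denominators by passing to $h(z) = (z-1)\ch_N(z) = z^{N+1} - (f+1)z^N + f$, then studies the reciprocal polynomial $\widetilde h(z) = z^{N+1}h(1/z) = fz^{N+1} - (f+1)z + 1$ so that ``all roots inside the unit disk'' becomes ``exactly one root (the spurious $z=1$) inside the unit disk,'' and applies Rouch\'e on the contour $C_{1+\epsilon}$ with the splitting $(fz^{N+1}+1) - (f+1)z$, comparing moduli via a Taylor expansion at $z=1$. You instead apply Rouch\'e directly to $\ch_N(z) = z^N - f(z^{N-1}+\cdots+1)$ on the unit circle itself, with the dominant term $z^N$ and the uniform triangle-inequality bound $|{-f}(z^{N-1}+\cdots+1)| \le fN < 1 = |z^N|$ on $|z|=1$. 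Your version is shorter and avoids both the auxiliary factor $(z-1)$ and the reciprocal substitution, and the required strict inequality holds uniformly on the contour by a one-line estimate rather than by a pointwise expansion; it also automatically rules out roots \emph{on} the circle, since equality $|g|=|r|$ there is impossible. What the paper's formulation buys is symmetry with the companion result for $f \ge 1/N$: the same polynomial $h$ and the same contour $C_{1+\epsilon}$ serve both theorems, and the reversal trick is what lets the authors reuse that machinery. Your observation at the end that the analogous direct splitting on $C_{1+\epsilon}$ handles the $f \ge 1/N$ case is essentially how the paper organizes things, just without the reciprocal step in the present theorem.
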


\begin{proof}
    It suffices to show that
    \begin{equation}
        \widetilde h(z) \ = \ h (1/z) z^{N + 1} \ = \ fz^{N + 1} - (f+1)z+1
    \end{equation} has exactly one root within the unit circle which
    comes from multiplying \newline $(z - 1)$. Again, consider the countour
    $C_{1+\epsilon}$ and split $\widetilde h (z)$ into two summands.
    \begin{align}
        \widetilde h(z) \ = \ (f z^{N + 1} + 1) - (f + 1)z
    \end{align}
    Taylor expand the two summands at $z \ = \ 1$, and notice that under the condition
    $f \ < \ 1/N$, the second summand has a larger modullus along the
    contour $C_{1 + \epsilon}$.
    \begin{align}
        fz^{N + 1} + 1 & \ = \ f(1 + (N + 1)\epsilon) + 1  \\
        (f + 1)z & \ = \ (f + 1) (1 + \epsilon)
    \end{align}
    Clearly, the second summand has one root inside the contour $C_{1 + \epsilon}$,
    which originates from $(z - 1)$. By Rouche's theorem, $\widetilde h(z)$
    has exactly one root inside the unit circle, i.e. $z = 1$, and all
    other roots have a modullus greater than 1. Consequently, $\ch_N(z) \ = \ h(z)/(z + 1)$
    has all of its roots strictly inside the unit circle.
\end{proof}

\begin{theorem}[Bounds for the largest root]\label{thm:Bound}

 Given that $f \ \geq \ 1/N$, the largest root $\lambda_{\max}$ of $\ch_N(z)$ given by
 \begin{equation}
 1 + f - \frac 1 {N} \ \leq\ \lambda_{\max} \ < \ 1 + f.
 \end{equation}
\end{theorem}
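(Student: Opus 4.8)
The plan is to avoid estimating the roots of $\ch_N$ directly and instead exploit the exact relation that $\lambda_{\max}$ satisfies. Recall from the preceding appendix results that $h(z) := (z-1)\ch_N(z) = z^{N+1} - (f+1)z^N + f$, and that under the hypothesis $f \geq 1/N$ the dominant root $\lambda_{\max}$ lies in $[1,\infty)$; in particular it is a genuine root of $h$ with $\lambda_{\max} > 0$. Substituting $\lambda_{\max}$ into $h$ and dividing the identity $\lambda_{\max}^{N+1} - (f+1)\lambda_{\max}^N + f = 0$ through by $\lambda_{\max}^N$ yields the fixed-point equation
\[
    \lambda_{\max} \ = \ 1 + f - \frac{f}{\lambda_{\max}^N}.
\]
Both bounds then follow from controlling the size of the correction term $f/\lambda_{\max}^N$.

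For the upper bound I would simply note that $f > 0$ and $\lambda_{\max} > 0$ force $f/\lambda_{\max}^N > 0$, so the displayed identity gives $\lambda_{\max} < 1 + f$ immediately, with strict inequality. For the lower bound the task is to show $f/\lambda_{\max}^N \leq 1/N$, equivalently $\lambda_{\max}^N \geq fN$. Here I would return to $\ch_N(\lambda_{\max}) = 0$ in its unfactored form, which reads
\[
    \lambda_{\max}^N \ = \ f\left(\lambda_{\max}^{N-1} + \cdots + \lambda_{\max} + 1\right) \ = \ f \sum_{j=0}^{N-1} \lambda_{\max}^j.
\]
Since $\lambda_{\max} \geq 1$, every summand satisfies $\lambda_{\max}^j \geq 1$, so the sum is at least $N$ and hence $\lambda_{\max}^N \geq fN$. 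Feeding this back into the fixed-point equation gives $\lambda_{\max} = 1 + f - f/\lambda_{\max}^N \geq 1 + f - 1/N$, as desired.

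The argument is essentially a short manipulation once the right relation is in hand, so there is no serious analytic obstacle; the only points requiring care are bookkeeping ones. First, I must invoke the earlier theorem to know $\lambda_{\max} \geq 1$, for without it the chain of inequalities in the lower bound breaks, since the estimate $\sum_{j} \lambda_{\max}^j \geq N$ relies on it. Second, I should check the boundary case $f = 1/N$: there $\ch_N(1) = 1 - fN = 0$ forces $\lambda_{\max} = 1$, and the lower bound degenerates to the equality $1 = 1 + f - 1/N$, confirming that the non-strict inequality $\geq$ (rather than $>$) is sharp precisely at $f = 1/N$.
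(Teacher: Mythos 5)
Your proof is correct, but it takes a genuinely different route from the paper's. The paper works by evaluating the characteristic polynomial at the two candidate endpoints: it checks $\ch_N(1+f) > 0$ for the upper bound, and for the lower bound it explicitly computes $\ch_N\left(1+f-\frac{1}{N}\right)$, reduces the required sign condition to $fN \leq \left(1+f-\frac{1}{N}\right)^N$, and verifies this via the substitution $f = 1/N + \epsilon$ and the binomial (Bernoulli) inequality $\left(1+\epsilon\right)^N \geq 1 + N\epsilon$; the Intermediate Value Theorem then locates $\lambda_{\max}$ in the desired interval. You instead exploit the equation that $\lambda_{\max}$ itself satisfies: dividing $h(\lambda_{\max}) = 0$ by $\lambda_{\max}^N$ gives the exact identity $\lambda_{\max} = 1 + f - f/\lambda_{\max}^N$, after which both bounds reduce to the single estimate $0 < f/\lambda_{\max}^N \leq 1/N$, the upper part of which you obtain from the Lotka--Euler form $\lambda_{\max}^N = f\sum_{j=0}^{N-1}\lambda_{\max}^j \geq fN$ using $\lambda_{\max} \geq 1$. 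Your version avoids the somewhat messy explicit evaluation of $\ch_N$ at $1+f-\frac{1}{N}$ and makes the structure of the bound transparent (the correction term $f/\lambda_{\max}^N$ is exactly what is being squeezed between $0$ and $1/N$); the paper's version is more self-contained in that it only needs sign information rather than the fixed-point identity. You are right to flag the dependence on $\lambda_{\max} \geq 1$ from the earlier appendix theorem, and your observation that equality in the lower bound occurs precisely at $f = 1/N$ is a nice sharpness check the paper does not make.
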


\begin{proof}
    The upper bound is trivial:
    \begin{equation}
        \ch_N(1+f) \ = \ f \ >\  0.
    \end{equation}
    We have $\ch_N(0) \ = \  -f \ < \  0$, and thus by the Intermediate Value Theorem the maximum root is
    bounded.

    To obtain the lower bound, we write $f \ =  \ 1/N + \epsilon$ for
    some $\epsilon \ \geq \  0$. With some algebra listed below
    , we compute $\ch_N(z)$ at the claimed lower
    bound. If we show that this value is less than zero, the dominating root must be greater than the purported lower bound. We find
    \begin{equation}
        \ch_N\left(
            1 + f - \frac 1 N
        \right)  \ = \ -
        \left(
            1 + f - \frac 1 N
        \right)^N \left[
            \frac {1} {fN - 1}
        \right]
        + \frac {fN} {fN - 1}.
    \end{equation}
    We wish to bound this value by zero. It suffices to show
    \begin{equation}
        fN - \left(
            1 + f - \frac 1 N
        \right)^N \ \leq\ 0,
    \end{equation}
    which, using the $\epsilon$ substitution, converts to
    \begin{equation}
        1 + N\epsilon - (1 + \epsilon)^N \ \geq \ 0.
    \end{equation}
    Expanding the power term by the binomial theorem, we see that inequality indeed holds.
\end{proof}

The following figures demonstrate the theorems in this section.

\begin{figure}[h]
    \centering
    \begin{subfigure}[b]{0.45\textwidth}
        \includegraphics[width=\textwidth]{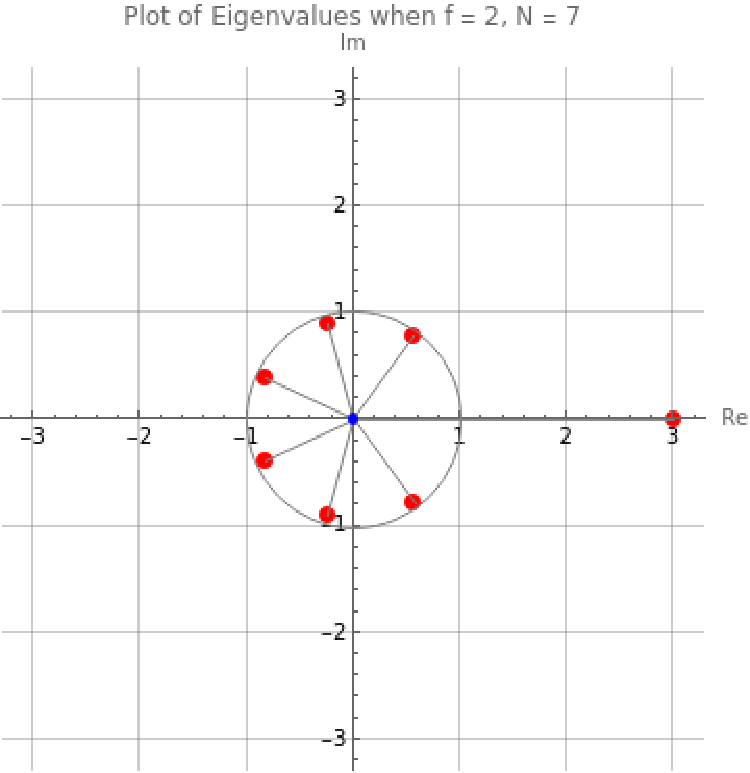}
        \caption{$f \ > \  1$, $N$ odd.}
        \label{fig:fig1}
    \end{subfigure}
    \hfill
    \begin{subfigure}[b]{0.45\textwidth}
        \includegraphics[width=\textwidth]{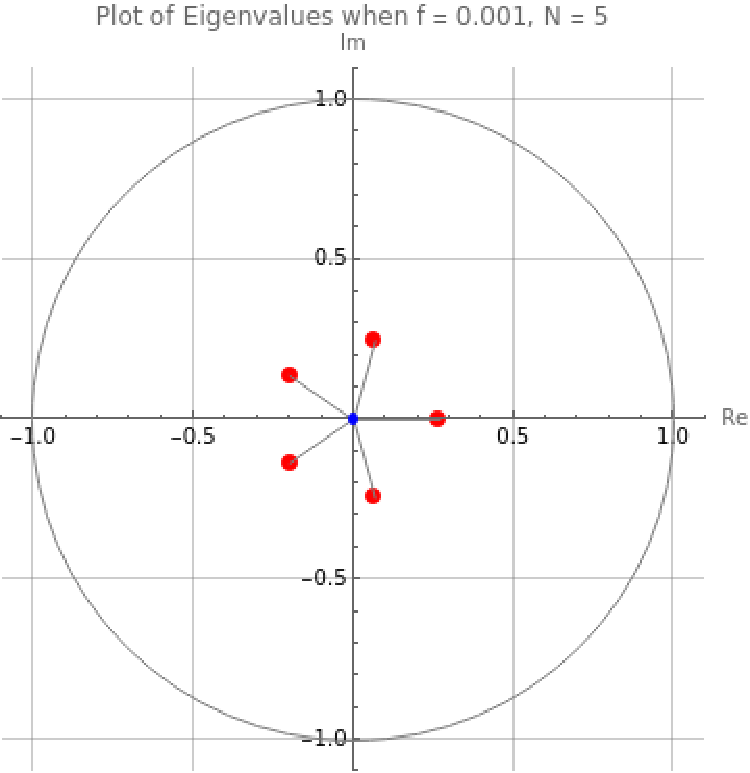}
        \caption{$f \approx 0$, $N$ odd.}
        \label{fig:fig2}
    \end{subfigure}
    \\
    \begin{subfigure}[b]{0.45\textwidth}
        \includegraphics[width=\textwidth]{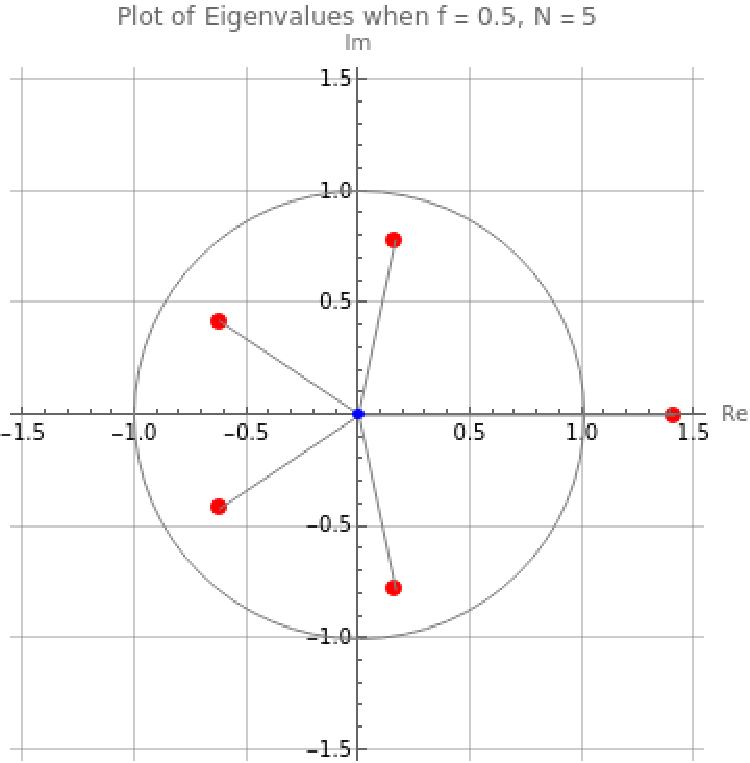}
        \caption{$f \ < \ 1$, $N$ odd.}
        \label{fig:fig3}
    \end{subfigure}
    \hfill
    \begin{subfigure}[b]{0.45\textwidth}
        \includegraphics[width=\textwidth]{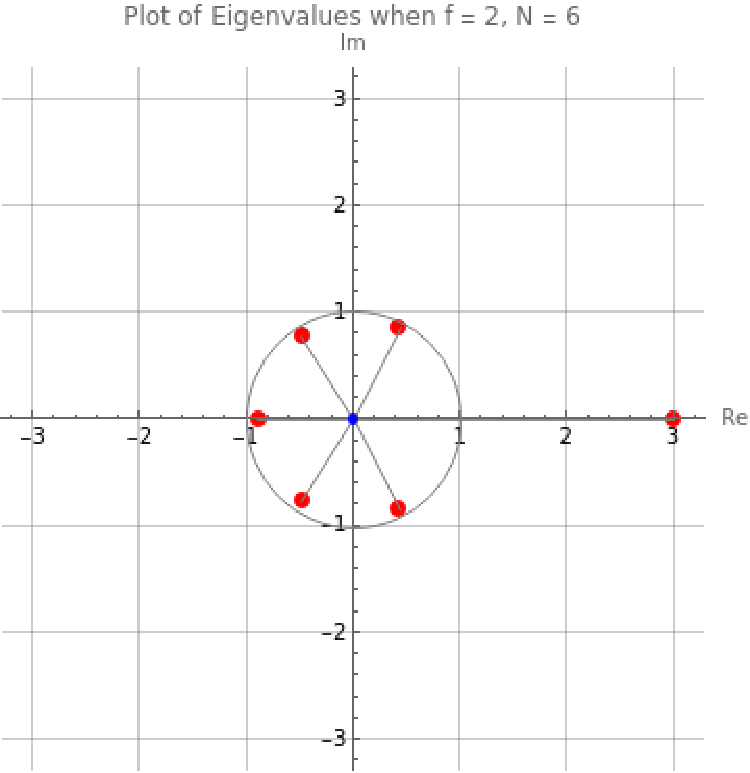}
        \caption{$f \ > \  1$, $N$ even.}
        \label{fig:fig4}
    \end{subfigure}
    \caption{Roots of $\ch_N(z)$ for varying $f, N$ in the complex plane.}
    \label{fig:overall}
\end{figure}


\newpage

\ \\

\end{document}